\algnewcommand\INPUT{\item[\textbf{Input:}]}%
\algnewcommand\OUTPUT{\item[\textbf{Output:}]}%
\def\bY{\mathbf Y}
\def\bF{\mathbf F}
\def\bX{\mathbf X}
\def\bZ{{\bf Z}}
\def\bC{{\bf C}}
\def\bA{\mathbf A}
\def\bB{\mathbf B}
\def\bU{\mathbf U}
\def\bW{\mathbf W}
\def\bX{\mathbf X}
\def\bI{\mathbf I}
\def\bJ{\mathbf J}
\def\bR{\mathbf R}
\def\bS{\mathbf S}
\def\bV{{\bf V}}
\def\bQ{\mathbf Q}
\def\bD{\mathbf D}
\def\bE{\mathbf E}
\def\bL{\mathbf L}
\def\bT{\mbox{\boldmath{$\Sigma_{n}$}}}
\newtheorem{thm}{Theorem}
\newtheorem{lem}{Lemma}
\newtheorem{prop}{Proposition}
\providecommand{\keywords}[1]
{
  \textbf{\textit{Keywords:}} #1
}
\title{Multiple Augmented Reduced Rank Regression for Pan-Cancer Analysis}
\author{Jiuzhou Wang, Eric F. Lock\\
Division of Biostatistics, University of Minnesota, Minneapolis, MN, USA \\ }
\begin{document}

\maketitle

\begin{abstract}
Statistical approaches that successfully combine multiple datasets are more powerful, efficient, and scientifically informative than separate analyses. To address variation architectures correctly and comprehensively for high-dimensional data across multiple sample sets (i.e., cohorts), we propose multiple augmented reduced rank regression (maRRR), a flexible matrix regression and factorization method to concurrently learn both covariate-driven and auxiliary structured variation.  We consider a structured nuclear norm objective that is motivated by random matrix theory, in which the regression or factorization terms may be shared or specific to any number of cohorts.  Our framework subsumes several existing methods, such as reduced rank regression and unsupervised multi-matrix factorization approaches, and includes a promising novel approach to regression and factorization of a single dataset (aRRR) as a special case.  Simulations demonstrate substantial gains in power from combining multiple datasets, and from parsimoniously accounting for all structured variation.    We apply maRRR to gene expression data from multiple cancer types (i.e., pan-cancer) from TCGA, with somatic mutations as covariates. The method performs well with respect to prediction and imputation of held-out data, and provides new insights into mutation-driven and auxiliary variation that is shared or specific to certain cancer types.\\ 
\end{abstract}

\keywords{cancer genomics, data integration, low rank matrix factorization, missing data imputation, nuclear norm, reduced rank regression}

\clearpage
\section{Introduction}

The proliferation of omics data in biomedicine and genomics has allowed for increasingly comprehensive investigations that span multiple sample sets and multiple molecular facets. Statistical approaches that successfully combine multiple datasets within a single analytical framework are more powerful, efficient, and scientifically informative than separate analyses. This has spurred advances in methodology for high-dimensional data integration, however, there remain unmet needs especially for multi-cohort data in which the same features are measured for different sample groups. \textcolor{black}{Our motivating example is gene expression and somatic mutation data from the Cancer Genome Atlas (TCGA) Pan-Cancer Project \citep{hoadley2018cell, Hunter2021TCGA}, for 6581 tumor samples from 30 cohorts corresponding to different cancer types. Given the importance of gene expression in the behavior of cancer, and the related etiology of distinct cancer types through somatic mutations,  we are interested in distinguishing variation due to somatic mutations from auxiliary structured variation in cancer gene expression and whether these effects are shared across cancer types.} 

Several unsupervised multi-matrix factorization methods provide low-rank representations of underlying structure. The singular value decomposition (SVD), principle component analysis (PCA) and other well-known 
approaches allow a 
rank $r$ approximation of a single matrix 
$\bX_{p \times n} \approx \bU_{p \times r} \bV_{n \times r}^T, r < \min{(p,n)}$. 
Loadings $\bU$ 
and scores $\bV$ 
explain variation in the rows or columns, respectively. The joint and individual variation explained (JIVE) method extends PCA to 
multiple datasets with shared columns $\{\bX_1,\hdots,\bX_J\}$ via $\bX_i \approx \bU_i \bV^T + \bW_i \bV_i^T$.
Here the joint scores $\bV$ capture shared structure among the datasets, and the individual scores $\bV_i$ capture structure specific to dataset $i$. Numerous related approaches, such as AJIVE \citep{Feng2018AJIVE} and SLIDE \citep{Irina2019SLIDE} 
 have been proposed to factorize multiple data from other perspectives. Moreover, BIDIFAC+ \citep{Lock2022BADIFACplus} enables a more flexible way to identify multiple shared and specific modules of variation
, which may be partially shared over row subset or column subsets.  However, these unsupervised methods suffer from neglecting covariate information. Other supervised techniques \citep{ Wang2021DeepIDA, zhang2022joint} 
identify structures across multiple datasets relevant to predicting an outcome, but they do not capture both covariate-driven and auxiliary structures. 

To impose low-rank covariate effects, different types of penalties have been introduced in the the multivariate least square regression framework. Reduced rank regression (RRR) \citep{Izenman1975RRR} is a popular 
approach to predict $\bX: p \times n$ from $\bY: q \times n$ via least squares in which the coefficients have low-rank, $\bX \approx \bB \bY$ with rank$(\bB) < \min{(p,q)}$.
Rank penalized (RSC) \citep{Bunea2011opti} and nuclear-norm penalized (NNP) least square criteria \citep{Yuan2007NN} are widely used 
alternatives with penalties that enforce low-rank coefficients. Combining RRR with adaptive NNP \citep{Chen2013ANN} shows a better performance than RSC. Integrative RRR \citep{Li2019iRRR} extends the estimation to multiple covariate sets all at once. Nonetheless, those regression methods have two limitations: (1) they do not allow for potentially unique covariate-driven signals across multiple sample cohorts and (2) they do not account for additional low-rank structure unrelated to the covariates.

Missing values 
occur in genomics and other fields due to cost limitations or other technical issues.  The data may have three types of missingness: entry-wise, column-wise or row-wise. To impute missing values 
matrix factorization based approaches, such as SVDImpute \citep{troyanskaya2001missing} and SoftImpute \citep{mazumder2010spectral} are popular since they are effective and straightforward, and many of the the aforementioned methods can be modified for imputation. However, 
they will suffer from the same limitations described above. 

Unifying reduced rank regression and unsupervised low-rank factorization using the nuclear norm penalty, we develop the multiple augmented reduced rank regression (maRRR) method for multi-cohort data that enables a very flexible approach for the simultaneous identification of covariate-driven effects and auxiliary structured variation. These covariate effects and augmented structures may be shared across any cohorts via a general objective function. 
This novel low-rank regression and factorization method can be used to impute various types of missing data, accurately capture the relationship between covariates and high-dimensional outcomes, and explore covariate-related and covariate-unrelated patterns of variation that are shared across or specific to different cohorts.  



\section{Proposed Model}

Let $\bX_j: p \times n_j$ denote data matrices with accompanying covariates $\bY_j: q \times n_j$ for $j$ sample cohorts   $j=1,...,J$. Concatenations across all cohorts are denoted by $\cdot$, e.g., $\bX_{\cdot} = [\bX_1,\bX_2,...,\bX_{J}]$ and $\bY_{\cdot} = [\bY_{1},\bY_{2},...,\bY_{J}]$. \textcolor{black}{Both $\bX_{\cdot}$ and $\bY_{\cdot}$ are the only observed data in the model.}  For our application, we consider gene expression data $\bX_{\cdot}$ and somatic mutations $\bY_{\cdot}$ for several patients across $J=30$ cancer types. 
\textcolor{black}{We are interested in decomposing $\bX_{\cdot}$ into `modules' of low-rank covariate-driven or auxiliary structures, where each module is shared on a different subset of the cohorts.} We estimate low-rank coefficient matrices $\bB_k: p \times q$ for \textcolor{black}{$k = 1,...,K$ modules of covariate-driven variation} and we concurrently estimate \textcolor{black}{low-rank} auxiliary variation structures $\bS_{\cdot}^{(l)}: p \times n$ \textcolor{black}{for  $l = 1,...,L$ modules.} Acknowledging the errors $\bE_j: p \times n_j , j=1,...,J$ for each cohort, the full model is
\begin{align}\label{model}
    \bX_{\cdot} &= \sum^{K}_{k=1} \bB_k \bY_{\cdot}^{(k)} + \sum^{L}_{l=1}  \bS_{\cdot}^{(l)} + \bE_{\cdot} 
    \end{align}
    where $\bY_{\cdot}^{(k)} = [\bY_{1}^{(k)},\bY_{2}^{(k)},...,\bY_{J}^{(k)}]$, $\bS_{\cdot}^{(l)} = [\bS_{1}^{(l)},\bS_{2}^{(l)},...,\bS_{J}^{(l)}]$, $\bE_{\cdot} = [\bE_{1},\bE_{2},...,\bE_{J}]$.

The presence of each $\bY_{j}^{(k)}$ or $\bS_{j}^{(l)}$ across the cohorts are determined by binary indicator matrices $\bC_Y: J\times K$ and $\bC_S: J\times L$ respectively: 
\begin{equation*}
      \bY_{j}^{(k)}= \begin{dcases} 
 \boldsymbol{0}_{q\times n_j} & \text{if }  \bC_Y[j,k] = 0 \\ 
 \bY_j & \text{if }  \bC_Y[j,k] = 1 ,
  \end{dcases} 
    \bS_{j}^{(l)} = 
  \begin{dcases} 
\boldsymbol{0}_{p\times n_j} & \text{if }  \bC_S[j,l] = 0 \\ 
 \bU_S^{(l)}\bV_{Sj}^{(l)T} & \text{if }  \bC_S[j,l] = 1.
  \end{dcases}
\end{equation*}
$\bU_S^{(l)}$ represents shared loadings and $\bV_{Sj}^{(l)}$ sample scores for cohort $j$ in module $l$. Both indicator matrices may be determined either by pre-existing knowledge or via a data-driven algorithm, which we will detail in Section~\ref{sec:decomposition} and Appendix~\ref{app:indicator}. They are fixed in the model estimation process. There should be no identical columns within $\bC_Y$, so that each $\bB_k\bY_{\cdot}^{(k)}$ is present on a distinct subset of the cohorts. Similarly, no duplicate columns within $\bC_S$.
We refer to each $\bB_k\bY_{\cdot}^{(k)}$ and $\bS_{\cdot}^{(l)}$ as a module. 
Each $\bS_{\cdot}^{(l)}$ gives a low-rank module that explains covariate-unrelated structured variability within the cohorts (e.g., cancer types) identified by $\bC_S[:,l]$. Each $\bB_k \bY_{\cdot}^{(k)}$ gives another low-rank module for covariate-driven structure for the cancer type identified by $\bC_Y[:,l]$. Each module is assumed to be low-rank, meaning it can be factorized as the product of a small number of row and column vectors, $\bB_k = \bU_B^{(k)} \bV_{B}^{(k) T}$ and  $\bS_{\cdot}^{(l)} = \bU_S^{(l)} \bV_{S}^{(l) T}$. \textcolor{black}{We provide a schematic of our model in \cref{fig:model} and a table of notation details in Appendix~\ref{app:notation}.}

\begin{figure}[]
        \centering
        \includegraphics[height=19cm,width=15cm]{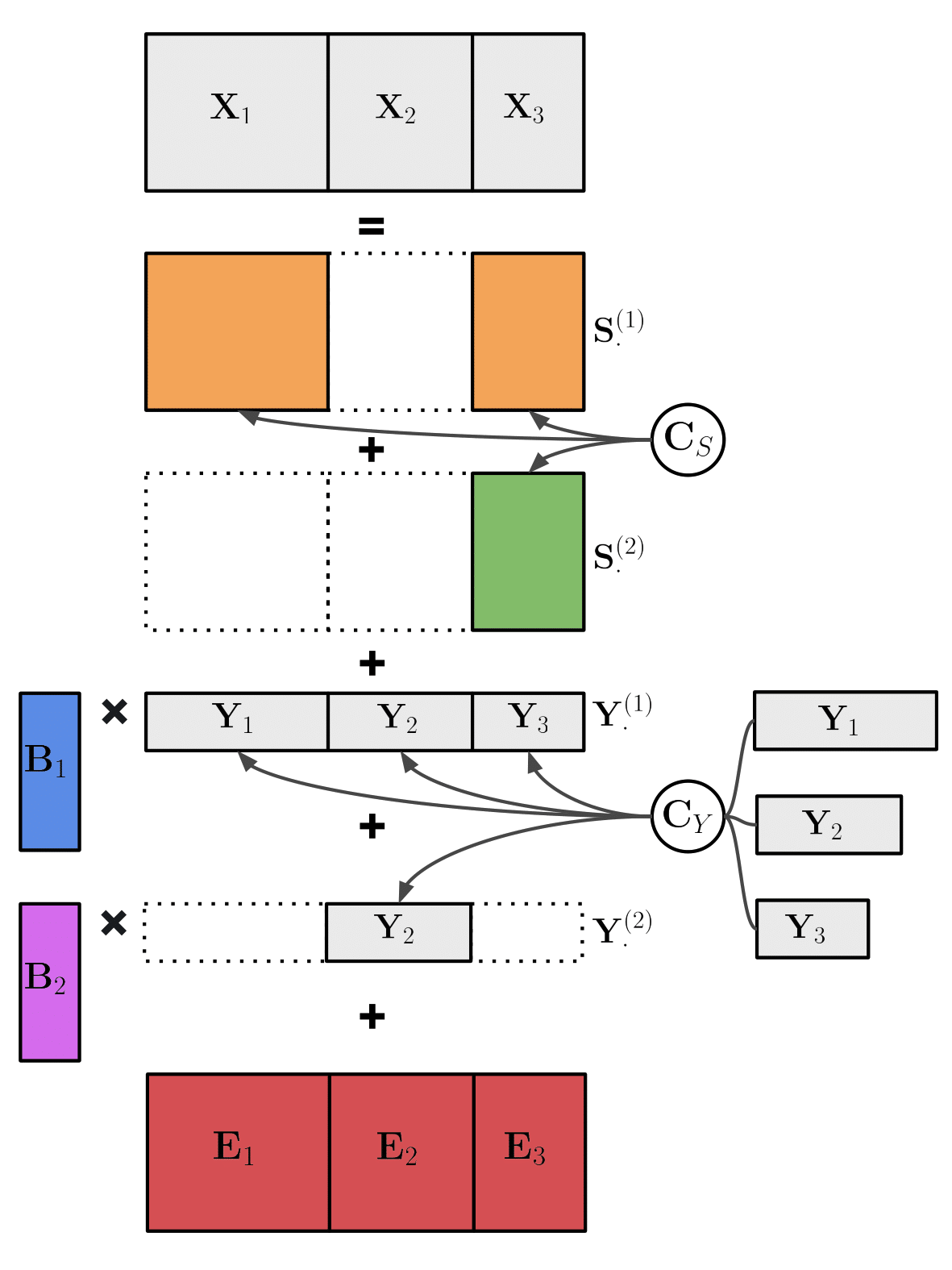}
        \caption{A schematic of our proposed model maRRR with 3 cohorts as an example. All matrices in grey are observed, i.e. outcomes $\bX_{\cdot} = [\bX_1,\bX_2,\bX_3]$ and covariates $\bY_{\cdot} = [\bY_1,\bY_2,\bY_3]$. Two binary indicator matrices for auxiliary structures $\bC_S = [[1,0,1]^T,[0,0,1]^T]$ and for covariate effects $\bC_Y = [[1,1,1]^T,[0,1,0]^T]$ are pre-specified. Then, the structures of $\bS_{\cdot}^{(1)} = [\bU_S^{(1)}\bV_{S1}^{(1)T},\mathbf{0},\bU_S^{(1)}\bV_{S3}^{(1)T}]$, $\bS_{\cdot}^{(2)} = [\mathbf{0},\mathbf{0},\bU_S^{(2)}\bV_{S3}^{(2)T}]$ and $\bY_{\cdot}^{(1)} = [\bY_1,\bY_2,\bY_3], \bY_{\cdot}^{(2)} = [\mathbf{0},\bY_2,\mathbf{0}]$ are determined. All matrices in color are to estimate, i.e., auxiliary structures $\bS_{\cdot}^{(1)}, \bS_{\cdot}^{(2)}$, covariate effect coefficients $\bB_1,\bB_2$ and random errors  $\bE_{\cdot} = [\bE_1,\bE_2,\bE_3]$.} 
        \label{fig:model}
\end{figure}





\section{Objective Function} \label{sec:obejective}

To estimate model \eqref{model} and impose low-rank structure, we minimize the following least squares criterion with a structured nuclear norm penalty: 
\begin{align}\label{ObjZ}
    \min_{\{\bB_k\}^K_{k=1},\{\bS_{\cdot}^{(l)}\}^L_{l=1}} \{ \frac{1}{2} ||\bX_{\cdot} - \sum^{K}_{k=1} \bB_k \bY_{\cdot}^{(k)} - \sum^{L}_{l=1}  \bS_{\cdot}^{(l)}||_F^2 + \sum^{K}_{k=1} \lambda_B^{(k)} ||\bB_k||_{*} +\sum^{L}_{l=1} \lambda_S^{(l)} ||\bS_{\cdot}^{(l)}||_{*}  \}
\end{align}
Here $||\cdot||_*$ denotes the nuclear norm, i.e., the sum of the singular values of the matrix, a convex penalty which encourages a low-rank solution. 
There are three special cases of the general objective functions worth noting. The first two are novel and the third is previously described, listed as follows:
\begin{enumerate}
    \item Augmented reduced rank regression (aRRR), our proposed approach minimizing \eqref{ObjZ} for $ K= L = J = 1$. The nuclear-norm penalized reduced rank regression model for a single matrix is “augmented" to account for auxiliary structured variation $\bS$ simultaneously. 
    \item Multi-cohort reduced rank regression (mRRR), our proposed approach minimizing \eqref{ObjZ} for $L = 0$, i.e. no auxiliary terms $\bS$. The reduced rank regression is extended to recover multiple (shared or individual) covariate effects at once.
    \item Optimizing this objective with no covariate-driven structure ($K=0$) corresponds to the BIDIFAC+ method \citep{Lock2022BADIFACplus} with horizontal structures only.
\end{enumerate}


\cite{mazumder2010spectral} and others have noted the equivalence of the nuclear norm penalty and an additive $L_2$ penalty on the terms in the low-rank factorization, and this leads to an alternative form of our objective (\cref{ObjZ}),
\begin{align}\label{ObjUV}
    \min_{\substack{\{\bU_B^{(k)},\bV_B^{(k)}\}^K_{k=1}, \{\bU_S^{(l)},\bV_S^{(l)}\}^L_{l=1}}}  &\frac{1}{2}\{ ||\bX_{\cdot} - \sum^{K}_{k=1} \bU_B^{(k)}\bV_B^{(k)T} \bY_{\cdot}^{(k)} - \sum^{L}_{l=1}  \bU_S^{(l)}\bV_S^{(l)T}||_F^2 + \nonumber\\
    &\sum^{K}_{k=1} \lambda_B^{(k)} (||\bU_B^{(k)}||_{F}^2 +||\bV_B^{(k)}||_{F}^2) +\sum^{L}_{l=1} \lambda_S^{(l)}( ||\bU_S^{(l)}||_{F}^2 + ||\bV_S^{(l)}||_{F}^2)  \}
\end{align}
where we only need to set a general upper bound for the estimated rank of each $\bB_k$ and $\bS_{\cdot}^{(l)}$, i.e. $r_{B,upper}$ and $r_{S,upper}$. \textcolor{black}{These upper bounds serve as the number of columns for each $\bU_B^{(k)}$, $\bV_B^{(k)}$, $\bU_S^{(l)}$, and $\bV_S^{(l)}$. The actual ranks of the solution may be smaller due to the rank sparsity encouraged by the nuclear norm penalty, and if the upper bounds are large enough the solution will correspond to that in equation~\eqref{ObjZ}.  We state this formally in Theorem~\ref{equ}}; the proof of this result, and all other novel results in this manuscript, are given in Appendix~\ref{app:proof}.

\begin{thm}\label{equ}
 If both (\ref{ObjZ}) and (\ref{ObjUV}) have the same penalty terms $\lambda_B^{(k)} >0, k= 1,...,K$ and $\lambda_S^{(l)} >0, l= 1,...,L$, the solutions to the objective functions coincide.
\end{thm}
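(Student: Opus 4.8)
The plan is to reduce the statement to the classical variational characterization of the nuclear norm and then apply it separately to each module. The single fact doing the work is: for any matrix $\bM$ and any integer $r\ge\mathrm{rank}(\bM)$,
\begin{align*}
 ||\bM||_* \;=\; \min_{\bU\bV^\smt=\bM}\ \tfrac12\bigl(||\bU||_F^2+||\bV||_F^2\bigr),
\end{align*}
where the minimum runs over factorizations with $\bU,\bV$ having $r$ columns and is attained at the \emph{balanced} factorization $\bU=\bU_0\bD_0^{1/2}$, $\bV=\bV_0\bD_0^{1/2}$ obtained from a thin (zero-padded if $r>\mathrm{rank}(\bM)$) SVD $\bM=\bU_0\bD_0\bV_0^\smt$. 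I would first record this as a lemma, noting it is exactly the identity invoked by \cite{mazumder2010spectral}.

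For the lemma, the ``$\le$'' direction is a one-line computation: at the balanced factorization $||\bU||_F^2=\mathrm{tr}(\bD_0^{1/2}\bU_0^\smt\bU_0\bD_0^{1/2})=\mathrm{tr}(\bD_0)=||\bM||_*$ since $\bU_0$ has orthonormal columns, and likewise $||\bV||_F^2=||\bM||_*$, so the objective equals $||\bM||_*$ (padding with zero columns changes nothing). For ``$\ge$'', combine the arithmetic--geometric mean inequality $\tfrac12(||\bU||_F^2+||\bV||_F^2)\ge||\bU||_F\,||\bV||_F$ with $||\bU\bV^\smt||_*\le||\bU||_F\,||\bV||_F$ (which follows from $||\bU\bV^\smt||_*=\sup_{||\bW||_{\mathrm{op}}\le1}\langle\bV^\smt,\bU^\smt\bW\rangle$ together with Cauchy--Schwarz and $||\bU^\smt\bW||_F\le||\bU||_F$).

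With the lemma in hand I would argue module by module. The squared-error term of \eqref{ObjUV} depends on the factors only through the products $\bB_k:=\bU_B^{(k)}\bV_B^{(k)\smt}$ and $\bS_{\cdot}^{(l)}:=\bU_S^{(l)}\bV_S^{(l)\smt}$, where it coincides with the squared-error term of \eqref{ObjZ}; the penalty $\tfrac12\sum_k\lambda_B^{(k)}(||\bU_B^{(k)}||_F^2+||\bV_B^{(k)}||_F^2)+\tfrac12\sum_l\lambda_S^{(l)}(||\bU_S^{(l)}||_F^2+||\bV_S^{(l)}||_F^2)$ is separable over the factor pairs. So one may minimize \eqref{ObjUV} by first minimizing over all factorizations of fixed products and then over the products; by the lemma the inner step replaces each $\tfrac12\lambda_B^{(k)}(||\bU_B^{(k)}||_F^2+||\bV_B^{(k)}||_F^2)$ by $\lambda_B^{(k)}||\bB_k||_*$ and each $\tfrac12\lambda_S^{(l)}(\cdots)$ by $\lambda_S^{(l)}||\bS_{\cdot}^{(l)}||_*$ (the outer factor $\tfrac12$ in \eqref{ObjUV} is exactly absorbed, since a balanced factorization gives $||\bU||_F^2+||\bV||_F^2=2||\bB_k||_*$; positivity of the $\lambda$'s makes the reduction non-degenerate). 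The resulting outer problem is precisely \eqref{ObjZ}. Hence the two objectives share the same optimal value, a tuple $(\{\bB_k\},\{\bS_{\cdot}^{(l)}\})$ solves \eqref{ObjZ} iff its balanced SVD factorizations solve \eqref{ObjUV}, and conversely the products of any \eqref{ObjUV}-optimal factors solve \eqref{ObjZ} (the remaining $\bU,\bV$ ambiguity is the harmless orthogonal rotation $\bU\mapsto\bU R,\ \bV\mapsto\bV R$).

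The one point needing care — and the main obstacle — is that the lemma gives equality only when the factors are allotted at least $\mathrm{rank}(\bM)$ columns, so the reduction presupposes that the column counts $r_{B,\mathrm{upper}}$, $r_{S,\mathrm{upper}}$ do not truncate the optimal $\bB_k$ and $\bS_{\cdot}^{(l)}$. Since any $\bB_k$ has rank at most $\min(p,q)$ and any $\bS_{\cdot}^{(l)}$ at most $\min(p,n)$, taking $r_{B,\mathrm{upper}}\ge\min(p,q)$ and $r_{S,\mathrm{upper}}\ge\min(p,n)$ removes the difficulty (this is the ``upper bounds large enough'' regime flagged before the theorem); with a strictly smaller budget one only gets that \eqref{ObjUV} solves \eqref{ObjZ} under an extra rank constraint. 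I would also note in passing that \eqref{ObjZ} is convex and, because the nuclear-norm weights are positive, coercive in $(\{\bB_k\},\{\bS_{\cdot}^{(l)}\})$, so a minimizer exists and the statement is not vacuous; the infimum in \eqref{ObjUV} is then attained at the balanced factorization of that minimizer.
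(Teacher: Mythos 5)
Your proposal is correct and follows essentially the same route as the paper: both rest on the variational characterization $||\bM||_*=\min_{\bU\bV^\smt=\bM}\tfrac12(||\bU||_F^2+||\bV||_F^2)$ from \cite{mazumder2010spectral} (the paper's Lemma~\ref{ZtoUV}), applied separately to each $\bB_k$ and $\bS_{\cdot}^{(l)}$ followed by an exchange of the order of minimization. Your added proof of the lemma and your explicit handling of the rank upper bounds $r_{B,\mathrm{upper}},r_{S,\mathrm{upper}}$ are welcome refinements of details the paper leaves implicit, but they do not change the argument.
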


In what follows in Section~\ref{sec:theory} we describe a random matrix theory approach to automatically select the nuclear norm penalty weights $\lambda$ for the different modules.


\section{Theoretical results}
\label{sec:theory}

We describe conditions on the penalty to avoid degenerate cases in which certain modules are guaranteed to be zero in the solution (regardless of the data $\bX_{\cdot}$ and $\bY_{\cdot}$) in Proposition~\ref{lambda_selection}. 
\begin{prop}\label{lambda_selection}
The following conditions on penalty parameters  are needed to allow for non-zero estimation of each $\{\bB_k\}_{k=1}^K,\{\bS^{(l)}\}_{l=1}^L$:
\begin{enumerate}
    \item Let $\mathcal{I}_k \subset \{1,...,k-1,k+1,...,K \}$ be any subset of $\bY$ modules for which the non-zero blocks of $\{\bY_{\cdot}^{(i)}\}_{i\in \mathcal{I}_k} $ cover exactly those of $\bY^{(k)}$, i.e. $\sum_{i\in \mathcal{I}_k} \bC_Y[\cdot,i] = c_y \cdot \bC_y[\cdot,k]$ for some positive integer $c_y$.  Then, $\lambda_B^{(k)} < \frac{1}{c_y} \sum_{i\in \mathcal{I}_k} \lambda_B^{(i)}$. 
    \item Let $\mathcal{I}_k \subset \{1,2,...,L \}$ be any subset of $\bS$ modules for which the non-zero blocks of $\{\bS_{\cdot}^{(i)}\}_{i\in \mathcal{I}_k} $ cover exactly those of $\bY_k$, i.e. $\sum_{i\in \mathcal{I}_k} \bC_S[\cdot,i] = c_{sy} \cdot \bC_Y[\cdot,k]$ for some positive integer $c_{sy}$. Then, $\lambda_B^{(k)} < \frac{1}{c_{sy}} \sum_{i\in \mathcal{I}_k} \lambda_S^{(i)} ||\bY_{\cdot}^{(k)}||_{*}$. 
    \item For $ l\neq l'$, if a module $\bS_{\cdot}^{(l')}$ is contained in another module $\bS_{\cdot}^{(l)}$, i.e. $\bC_S[j,l]\geq\bC_S[j,l'],\forall j$, then $\lambda_S^{(l')}< \lambda_S^{(l)}$.
    \item Let $\mathcal{I}_l \subset \{1,...,l-1,l+1,...,L \}$ be any subset of $\bS$ modules that the non-zero blocks of $\{\bS_{\cdot}^{(i)}\}_{i\in \mathcal{I}_l} $ cover exactly those of $\bS_{\cdot}^{(l)}$, i.e. $\sum_{i\in \mathcal{I}_l} \bC_S[\cdot,i] = c_s \cdot \bC_S[\cdot,l]$ for some positive integer $c_s$. Then, $\lambda_S^{(l)} < \frac{1}{c_s} \sum_{i\in \mathcal{I}_l} \lambda_S^{(i)}$. 
\end{enumerate}
\end{prop}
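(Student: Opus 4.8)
The plan is to prove the contrapositive of each claim: if the stated inequality on the penalty parameters fails, then the corresponding module is forced to be zero in every solution. The unifying idea is a \emph{replacement} (or substitution) argument: given any feasible point with a nonzero target module, I construct a competing feasible point in which that module is absorbed into the other modules whose supports cover it, and I show the objective strictly decreases (or is non-increasing, which combined with uniqueness considerations suffices). The algebraic engine for this is the triangle inequality for the nuclear norm together with its behavior under the block structure: if $\bC_Y[\cdot,k]$ equals $\frac{1}{c}\sum_{i\in\mathcal I_k}\bC_Y[\cdot,i]$, then a block matrix $\bB_k\bY_\cdot^{(k)}$ supported on the cohorts flagged by $\bC_Y[\cdot,k]$ can be written as a sum of pieces, one on each support of $i\in\mathcal I_k$, and the nuclear norm of a horizontally-concatenated block matrix is subadditive over (and bounded below by that of) its sub-blocks. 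I would set up this block bookkeeping once, carefully, and then reuse it for all four parts.

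For part~1, take a solution $(\{\bB_k\},\{\bS^{(l)}\})$ with $\bB_k\neq \bm 0$. The columns of $\bY_\cdot^{(k)}$ are covered $c_y$ times over by $\{\bY_\cdot^{(i)}\}_{i\in\mathcal I_k}$, so I can define new coefficients $\tilde\bB_i = \bB_i + \frac{1}{c_y}\bB_k$ for $i\in\mathcal I_k$ and $\tilde\bB_k = \bm 0$, leaving all other terms unchanged; the reconstruction $\sum_i \tilde\bB_i\bY_\cdot^{(i)}$ is unchanged because each cohort block of $\bY_\cdot^{(k)}$ gets exactly $\bB_k\bY_j$ back, so the squared-error term is identical. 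The penalty changes by $\sum_{i\in\mathcal I_k}\lambda_B^{(i)}(\|\bB_i+\tfrac{1}{c_y}\bB_k\|_* - \|\bB_i\|_*) - \lambda_B^{(k)}\|\bB_k\|_*$, which by the triangle inequality is at most $\big(\frac{1}{c_y}\sum_{i\in\mathcal I_k}\lambda_B^{(i)} - \lambda_B^{(k)}\big)\|\bB_k\|_*$. If the asserted inequality is violated, i.e.\ $\lambda_B^{(k)}\ge \frac{1}{c_y}\sum_{i\in\mathcal I_k}\lambda_B^{(i)}$, this is $\le 0$, so $\bB_k=\bm 0$ can always be assumed, proving that a strict inequality is \emph{needed} for $\bB_k$ to be (guaranteed) estimable. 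Part~4 is the verbatim analogue with $\bS$ in place of $\bB$ and the same splitting over covering supports; part~3 is the special case of part~4 with $\mathcal I_l=\{l\}$, $c_s=1$ — containment of supports means $\bS^{(l')}$ can be folded entirely into $\bS^{(l)}$ with penalty change $\le(\lambda_S^{(l)}-\lambda_S^{(l')})\|\bS^{(l')}_\cdot\|_*$.

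Part~2 is the cross-term case and is where the nuclear norms of $\bB$-modules and $\bS$-modules must be compared on a common footing, so it is the one I expect to require the most care. Here I would take $\bB_k\neq\bm 0$ and move the covariate-driven module $\bB_k\bY_\cdot^{(k)}$ into the auxiliary modules: since $\{\bC_S[\cdot,i]\}_{i\in\mathcal I_k}$ cover $\bC_Y[\cdot,k]$ exactly $c_{sy}$ times, set $\tilde\bS^{(i)}_\cdot = \bS^{(i)}_\cdot + \frac{1}{c_{sy}}(\bB_k\bY_\cdot^{(k)})\big|_{\text{support of }i}$ for $i\in\mathcal I_k$ and $\tilde\bB_k=\bm 0$. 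The fit is unchanged; the penalty change is bounded by $\sum_{i\in\mathcal I_k}\lambda_S^{(i)}\cdot\frac{1}{c_{sy}}\|(\bB_k\bY_\cdot^{(k)})|_{\text{support }i}\|_* - \lambda_B^{(k)}\|\bB_k\|_*$, and the crux is to bound $\|(\bB_k\bY_\cdot^{(k)})|_{\text{support }i}\|_* \le \|\bB_k\bY_\cdot^{(k)}\|_* \le \|\bB_k\|_*\,\|\bY_\cdot^{(k)}\|_*$ — actually I should be more careful and use the sharper sub-multiplicativity $\|\bB_k\bY_\cdot^{(k)}\|_* \le \|\bB_k\|_{\mathrm{op}}\|\bY_\cdot^{(k)}\|_*$ or the stated $\|\bB_k\|_*\|\bY_\cdot^{(k)}\|_*$ bound as written in the proposition; matching the exact constant $\|\bY_\cdot^{(k)}\|_*$ in the claim is the delicate bookkeeping step. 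Granting that bound, the penalty change is $\le\big(\frac{1}{c_{sy}}\sum_{i\in\mathcal I_k}\lambda_S^{(i)}\|\bY_\cdot^{(k)}\|_* - \lambda_B^{(k)}\big)\|\bB_k\|_*$, which is $\le 0$ when the asserted inequality fails, again forcing $\bB_k=\bm 0$. The main obstacle throughout is ensuring the support-restriction and re-concatenation operations interact correctly with the nuclear norm (it is \emph{not} additive over horizontal blocks in general, only subadditive, and that direction is exactly the one we need), and getting the covering multiplicities $c_y,c_{sy},c_s$ to cancel precisely; once the block-norm lemma is stated cleanly, all four parts are short.
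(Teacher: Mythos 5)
Your proposal is correct and follows essentially the same route as the paper's proof: a replacement argument that, assuming the inequality is violated, absorbs the target module into the covering modules (via $\widetilde{\bB}_i = \bB_i + \tfrac{1}{c_y}\bB_k$, or the analogous splits for $\bS$), leaves the fit term unchanged by the exact-covering condition, and bounds the penalty change via the triangle inequality plus, for the cross-term case, sub-multiplicativity $\|\bB_k\bY_{\cdot}^{(k)}\|_* \le \|\bB_k\|_*\|\bY_{\cdot}^{(k)}\|_*$ and the fact that zeroing blocks cannot increase the nuclear norm. The only cosmetic slip is calling part 3 a special case of part 4 (containment is not exact covering), but the direct folding argument you then give for part 3 is the right one.
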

To motivate a random matrix theory approach to select the tuning parameters, we present two results establishing the connection between the nuclear norm penalty and singular value thresholding.  Lemma~\ref{nuclearA} is a well-known result for the unsupervised case \citep{Cai2010ASV}, and in Proposition~\ref{nuclearAY} we extend it to the regression context.    
\begin{lem}\label{nuclearA}
Let $\bU\bD\bV^T$ be the SVD of a matrix $\bX$. The solution to $\min_{\bS} \{\frac{1}{2} ||\bX-\bS||_F^2 + \lambda||\bS||_{*}\}$ is $\bS = \bU\widetilde{\bD}\bV^T$, where $\widetilde{\bD}$ is diagonal with entries $\widetilde{\bD}[i,i] = \max(\bD[i,i]-\lambda,0)$. 
\end{lem}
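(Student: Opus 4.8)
The plan is to prove this classical singular value soft-thresholding result by reducing the matrix optimization to a sum of scalar problems, exploiting unitary invariance of both the Frobenius norm and the nuclear norm. First I would write $\bX = \bU\bD\bV^T$ and note that for \emph{any} candidate $\bS$ we may change variables $\widetilde{\bS} = \bU^T \bS \bV$, so that $\|\bX - \bS\|_F^2 = \|\bD - \widetilde{\bS}\|_F^2$ and $\|\bS\|_* = \|\widetilde{\bS}\|_*$; hence it suffices to minimize $\frac12\|\bD - \widetilde{\bS}\|_F^2 + \lambda\|\widetilde{\bS}\|_*$ over all matrices $\widetilde{\bS}$ and then set $\bS = \bU\widetilde{\bS}\bV^T$. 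The goal is to show the minimizing $\widetilde{\bS}$ is exactly the diagonal matrix $\widetilde{\bD}$ with $\widetilde{\bD}[i,i] = \max(\bD[i,i]-\lambda, 0)$.

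The core step is to argue the optimal $\widetilde{\bS}$ must itself be diagonal (in fact diagonal with the same ordering as $\bD$), after which the objective decouples into independent scalar problems $\min_{s_i \ge 0}\{\frac12(d_i - s_i)^2 + \lambda s_i\}$ whose solution is the familiar soft-threshold $s_i = \max(d_i - \lambda, 0)$. There are two standard routes to the "must be diagonal" claim. One is the subgradient/optimality-condition approach: the subdifferential of $\|\cdot\|_*$ at a matrix with SVD $\bP\bSigma\bQ^T$ (with $\bSigma$ the nonzero part) is $\{\bP\bQ^T + \bW : \bP^T\bW = 0, \ \bW\bQ = 0, \ \|\bW\|_2 \le 1\}$, and writing the stationarity condition $\bD - \widetilde{\bS} \in \lambda\, \partial\|\widetilde{\bS}\|_*$ one verifies directly that $\widetilde{\bS} = \widetilde{\bD}$ satisfies it (taking $\bW$ to absorb the thresholded-to-zero directions with operator norm $\le 1$), and convexity then gives global optimality and uniqueness. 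The second route is a von Neumann trace-inequality argument: $\langle \bD, \widetilde{\bS}\rangle \le \sum_i d_i\,\sigma_i(\widetilde{\bS})$ with equality only when $\widetilde{\bS}$ shares singular vectors with $\bD$, which lets one lower-bound the objective by the separable function of the singular values of $\widetilde{\bS}$ and conclude.

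I expect the main obstacle to be stating and using the subdifferential of the nuclear norm correctly — in particular handling the "slack" term $\bW$ so that the operator-norm constraint $\|\bW\|_2 \le 1$ is met precisely when some $d_i \le \lambda$ (so that the corresponding $s_i$ is zero and that coordinate of $\bD - \widetilde{\bS}$ equals $d_i$, which must then have magnitude $\le \lambda$). This is exactly where the threshold level $\lambda$ enters. If I instead take the von Neumann route, the delicate point shifts to the equality case of the trace inequality and to arguing that one may assume the singular values of $\widetilde{\bS}$ are sorted compatibly with those of $\bD$; both are routine but must be handled with a little care. Since this is Lemma~\ref{nuclearA} and the excerpt already attributes it to \citep{Cai2010ASV}, I would keep the argument short: state the change of variables, invoke the subdifferential characterization, verify the optimality condition for $\widetilde{\bD}$, and cite convexity for uniqueness.
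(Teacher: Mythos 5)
The paper does not prove this lemma itself --- it is stated as a known result and attributed to \citet{Cai2010ASV} --- so there is no in-paper proof to compare against; your sketch is essentially the canonical argument from that reference (verify $\bX-\bS^\ast\in\lambda\,\partial\|\bS^\ast\|_*$ for $\bS^\ast=\bU\widetilde{\bD}\bV^T$ using the Watson characterization of the subdifferential, then invoke strict convexity of the Frobenius term for uniqueness), and it is correct. Two small points of care: the unitary change of variables $\widetilde{\bS}=\bU^T\bS\bV$ is a norm-preserving bijection only if $\bU$ and $\bV$ come from the \emph{full} (square) SVD rather than the thin one; and on the subgradient route you do not actually need the separate ``the minimizer must be diagonal'' step --- writing $\bX-\bS^\ast=\lambda\bigl(\bP\bQ^T+\bW\bigr)$ with $\bP,\bQ$ the singular vectors surviving the threshold and $\bW=\sum_{d_i\le\lambda}(d_i/\lambda)\,u_iv_i^T$ (so $\|\bW\|_2\le 1$ precisely because $d_i\le\lambda$ there) already certifies global optimality of the stated candidate.
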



\begin{prop}\label{nuclearAY}
Let $\bY$ be a semi-orthogonal matrix such that $\bY\bY^T = \bI$ and $\bU\bD\bV^T$ be the SVD of a matrix $\bX\bY^T$. The solution to both of the following objectives:
\begin{align*}
   \min_{\bB} \{\frac{1}{2} ||\bX-\bB\bY||_F^2 + \lambda||\bB||_{*}\} \; \; \text{and} \; \; \min_{\bB} \{\frac{1}{2} ||\bX-\bB\bY||_F^2 + \lambda||\bB\bY||_{*}\},
\end{align*}
is $\bB = \bU\widetilde{\bD}\bV^T$, where $\widetilde{\bD}$ is diagonal with entries $\widetilde{\bD}[i,i] = \max(\bD[i,i]-\lambda,0)$. 

\end{prop}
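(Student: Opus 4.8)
The plan is to reduce both minimization problems to the unsupervised singular‑value soft‑thresholding problem already solved in \cref{nuclearA}, applied to the matrix $\bX\bY^T$. Two elementary facts carry the whole argument: (i) right‑multiplication by a matrix with orthonormal rows leaves the singular values unchanged, so $||\bB\bY||_* = ||\bB||_*$ for every $\bB$ and the two displayed objectives are literally the same function of $\bB$; and (ii) the least‑squares term obeys a Pythagorean identity that separates the part depending on $\bB$ (only through $\bX\bY^T-\bB$) from a part that is constant in $\bB$.

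For fact (i), I would use $\bY\bY^T=\bI$ to write $(\bB\bY)(\bB\bY)^T = \bB\,\bY\bY^T\,\bB^T = \bB\bB^T$; hence $\bB\bY$ and $\bB$ have identical nonzero singular values and $||\bB\bY||_* = ||\bB||_*$. Thus the second objective equals the first, and it suffices to analyze $\min_{\bB}\{\tfrac12||\bX-\bB\bY||_F^2 + \lambda||\bB||_*\}$.

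For fact (ii), decompose $\bX - \bB\bY = \bX(\bI - \bY^T\bY) + (\bX\bY^T - \bB)\bY$. Since $(\bI - \bY^T\bY)\bY^T = \bY^T - \bY^T(\bY\bY^T) = \boldsymbol{0}$, the Frobenius inner product of the two summands vanishes, and $\bY\bY^T=\bI$ gives $||(\bX\bY^T-\bB)\bY||_F^2 = \operatorname{tr}\big((\bX\bY^T-\bB)(\bX\bY^T-\bB)^T\big) = ||\bX\bY^T-\bB||_F^2$. Therefore
\[
\frac{1}{2}||\bX - \bB\bY||_F^2 + \lambda||\bB||_* \;=\; \frac{1}{2}||\bX\bY^T - \bB||_F^2 + \lambda||\bB||_* \;+\; \frac{1}{2}||\bX(\bI - \bY^T\bY)||_F^2 ,
\]
whose last term is free of $\bB$. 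The minimizer over $\bB$ is then exactly that of \cref{nuclearA} with input matrix $\bX\bY^T = \bU\bD\bV^T$, namely the soft‑threshold $\bB = \bU\widetilde{\bD}\bV^T$ (unique, by strict convexity of the quadratic term together with $\lambda>0$); and since the second objective coincides with the first, it has the same minimizer.

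I do not anticipate a real obstacle here — the only care needed is dimensional bookkeeping, i.e.\ recording that ``semi‑orthogonal with $\bY\bY^T=\bI$'' forces $\bY$ to be $q\times n$ with $q\le n$, so that $\bY^T\bY$ is the orthogonal projection onto the row space of $\bY$ while $\bY\bY^T=\bI_q$, which is what makes both the norm‑invariance and the Pythagorean step go through. Everything else is a one‑line trace computation.
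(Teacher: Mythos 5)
Your proposal is correct and follows essentially the same route as the paper: the same orthogonal decomposition $\bX-\bB\bY = \bX(\bI-\bY^T\bY)+(\bX\bY^T-\bB)\bY$ reducing the problem to Lemma~\ref{nuclearA} applied to $\bX\bY^T$, plus the invariance $\|\bB\bY\|_*=\|\bB\|_*$ to handle the second objective. Your direct trace computation of $\|(\bX\bY^T-\bB)\bY\|_F^2$ via $\bY\bY^T=\bI$ is marginally slicker than the paper's construction of an orthogonal completion $\bQ=[\bY,\bY^*]$, but the argument is the same in substance.
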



While the relative merits of penalizing  $||\bB||_*$ or $||\bB\bY||_*$ has been debated \citep{Yuan2007NN,Chen2013ANN}, \cref{nuclearAY} shows they are identical if $\bY$ is semi-orthogonal.
In practice, we orthogonalize the columns of $\bY$ prior to estimation.  However, this requires that the number of features in $\bY$ is less than the sample size (e.g., $q<n$); if $q \geq n$ then $\bY$ will be \textcolor{black}{semi-orthogonal in the opposite direction $\bY^T\bY = \bI$,} causing the solution to degenerate to the unsupervised case, which we establish in \textcolor{black}{Proposition 5 in Appendix~\ref{app:proof}}. 

The following propositions describe the distribution of the singular values of a random matrix under general assumptions, which can then be used to motivate tuning parameters.

\begin{prop} \label{Rudelson}
Let $\lambda_{max}$ be the largest singular value of a matrix $\bE: m\times n$ of independent Guassian entries with mean 0 and variance $\sigma^2$.  \textcolor{black}{We have $E(\lambda_{max}) \leq \sigma(\sqrt{m}+\sqrt{n})$.}
\end{prop}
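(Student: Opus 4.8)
The plan is to reduce to the standard Gaussian case and then invoke a Gaussian process comparison. Write $\bE = \sigma \bG$ where $\bG$ is $m \times n$ with i.i.d.\ standard normal entries; since $\lambda_{max}(\cdot)$ is positively homogeneous, $\lambda_{max}(\bE) = \sigma\,\lambda_{max}(\bG)$, so it suffices to prove $E[\lambda_{max}(\bG)] \le \sqrt{m} + \sqrt{n}$. The first step is to realize the operator norm as the supremum of a Gaussian process: $\lambda_{max}(\bG) = \sup_{\bu \in S^{m-1},\, \bv \in S^{n-1}} \bu^T\bG\bv =: \sup_{(\bu,\bv)} X_{\bu,\bv}$, where $S^{d-1}$ denotes the unit sphere in $\mathbb{R}^d$. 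I would then introduce the decoupled comparison process $Y_{\bu,\bv} = \bg^T\bu + \bh^T\bv$ with $\bg \sim N(0,\bI_m)$ and $\bh \sim N(0,\bI_n)$ independent.

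The key computation is the covariance structure: one checks $E[X_{\bu,\bv}X_{\bu',\bv'}] = (\bu^T\bu')(\bv^T\bv')$ and $E[Y_{\bu,\bv}Y_{\bu',\bv'}] = \bu^T\bu' + \bv^T\bv'$, so on the product of spheres $E[(X_{\bu,\bv}-X_{\bu',\bv'})^2] = 2 - 2(\bu^T\bu')(\bv^T\bv')$ and $E[(Y_{\bu,\bv}-Y_{\bu',\bv'})^2] = 4 - 2\bu^T\bu' - 2\bv^T\bv'$. The inequality $E[(X_{\bu,\bv}-X_{\bu',\bv'})^2] \le E[(Y_{\bu,\bv}-Y_{\bu',\bv'})^2]$ then collapses to $(1-\bu^T\bu')(1-\bv^T\bv') \ge 0$, which holds because $\bu^T\bu' \le 1$ and $\bv^T\bv' \le 1$ for unit vectors (and both increments vanish when $(\bu,\bv)=(\bu',\bv')$). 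By the Sudakov--Fernique comparison inequality this gives $E[\sup_{(\bu,\bv)} X_{\bu,\bv}] \le E[\sup_{(\bu,\bv)} Y_{\bu,\bv}]$; the right-hand side decouples as $E[\sup_{\bu} \bg^T\bu] + E[\sup_{\bv} \bh^T\bv] = E\|\bg\|_2 + E\|\bh\|_2$, and Jensen's inequality bounds these by $\sqrt{E\|\bg\|_2^2} = \sqrt{m}$ and $\sqrt{n}$ respectively, completing the argument.

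The main thing to get right is the increment comparison — verifying the covariances and the sign condition $(1-\bu^T\bu')(1-\bv^T\bv')\ge 0$ — together with a clean invocation of Sudakov--Fernique (which requires only the increment inequality, not any matching of individual variances, and for which measurability is a non-issue since the index set $S^{m-1}\times S^{n-1}$ is compact and the sample paths are continuous). The remaining steps are routine. As an alternative, this bound may simply be cited as a classical fact about Gaussian matrices (e.g., via the Davidson--Szarek / Gordon estimates), but the Gaussian-process argument above is self-contained.
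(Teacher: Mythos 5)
Your proof is correct. Note, however, that the paper does not actually prove this proposition: it states that the result ``comes directly from'' Rudelson and Vershynin (2010) and leaves it at that, so there is no in-paper argument to compare against. What you have written is a correct, self-contained derivation of exactly the bound being cited, and it is the classical comparison argument from that literature: the covariance computations are right, the increment inequality reduces to $2(1-\bu^T\bu')(1-\bv^T\bv')\ge 0$ as you say, Sudakov--Fernique is the appropriate tool (it needs only the increment domination, not equal variances), and the final decoupling plus Jensen gives $E\|\bg\|_2 + E\|\bh\|_2 \le \sqrt{m}+\sqrt{n}$. The reduction from variance $\sigma^2$ to the standard Gaussian case by homogeneity of the operator norm is also fine. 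In short, you supply a complete proof where the paper supplies only a citation, and the proof you supply is the standard one behind that citation.
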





\begin{prop} \label{Shabalin}
Let $\bY_{q\times n}$ be semi-orthogonal such that $\bY\bY^T = \bI$. For integers $m,q \geq1 $ defined in a way that $\frac{m}{q} \rightarrow c>0 $ as $q\rightarrow \infty$, Let $\bX_{m\times n},\bB_{m\times q},\bE_{m\times n}$ be three matrices such that $\bX=\bB\bY_{q\times n}+ \frac{1}{\sqrt{q}}\bE$, where entries of $\bE$ are independent Guassian with mean 0 and variance $\sigma^2$. Assume $rank(\bB) = r$. Denote the singular values of $\bB$ and $\bX\bY^T$ are $\sigma_1(\bB)\geq ... \geq \sigma_r (\bB)>0$ and $\sigma_1(\bX\bY^T)\geq ... \geq \sigma_r (\bX\bY^T) >0$ respectively. As $n \rightarrow \infty$, 
\begin{align*}
\sigma_j(\bX\bY^T)\xrightarrow{P}
    \begin{dcases} s(\sigma_j(\bB)) > 1 + \sqrt{c},
   & \text{if }  \sigma_j(\bB)>\sqrt[4]{c} \\ 
  1+\sqrt{c}, & \text{if } \sigma_j(\bB)\leq\sqrt[4]{c} 
  \end{dcases},\forall 1\leq j\leq r,
\end{align*}
where $s(\cdot)$ is a known function. In particular, when $\bY$ is an identity matrix ($q=n$) and $\bX=\bB+ \frac{1}{\sqrt{n}}\bE$, it follows that 
$\sigma_j(\bX)\xrightarrow{P}
    \begin{dcases} s(\sigma_j(\bB)) > 1 + \sqrt{c},
   & \text{if }  \sigma_j(\bB)>\sqrt[4]{c} \\ 
  1+\sqrt{c}, & \text{if } \sigma_j(\bB)\leq\sqrt[4]{c}. 
  \end{dcases}$
\end{prop}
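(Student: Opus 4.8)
The plan is to strip $\bY$ out of the problem using its orthogonality and the rotational invariance of the Gaussian, thereby reducing the claim to a standard spiked random-matrix result. First, since $\bY\bY^T=\bI$,
\[
\bX\bY^T \;=\; \bB\bY\bY^T + \tfrac{1}{\sqrt q}\,\bE\bY^T \;=\; \bB + \tfrac{1}{\sqrt q}\,\bE\bY^T ,
\]
so the ``signal'' part of $\bX\bY^T$ is exactly $\bB$, with singular values literally $\sigma_1(\bB)\ge\cdots\ge\sigma_r(\bB)>0$, and $n$ now enters only through the constraint $q\le n$ needed for a semi-orthogonal $\bY$ with $\bY\bY^T=\bI$ to exist, which holds for $n$ large.

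Next I would Gaussianize the noise, i.e.\ show that $\bE\bY^T$ is equal in distribution to an $m\times q$ matrix $\bG$ with i.i.d.\ $N(0,\sigma^2)$ entries. Extend the $q$ orthonormal rows of $\bY$ to an orthonormal basis of $\mathbb R^n$, forming an $n\times n$ orthogonal matrix $\bQ$ whose first $q$ rows are those of $\bY$. Each row of $\bE$ is $N(\mathbf 0,\sigma^2\bI)$, hence so is each row of $\bE\bQ^T$; therefore $\bE\bQ^T$ has i.i.d.\ $N(0,\sigma^2)$ entries, and $\bE\bY^T$ — which is precisely its first $q$ columns — does too. Consequently $\bX\bY^T$ is distributed as $\bB+\tfrac{1}{\sqrt q}\bG$: a fixed rank-$r$ matrix perturbed by an independent, suitably scaled rectangular Gaussian matrix with aspect ratio $m/q\to c$ as $q\to\infty$ (with $n$ now irrelevant to the distribution).

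At this point I would invoke the known phase-transition (``BBP''-type) results for low-rank deformations of a large rectangular Gaussian matrix — the line of work of Baik--Ben Arous--P\'ech\'e, Benaych-Georges--Nadakuditi, and Shabalin--Nobel. After rescaling so that $\sigma=1$, consistent with the $\tfrac{1}{\sqrt q}$ normalization in the statement, the noise matrix $\tfrac{1}{\sqrt q}\bG$ has bulk spectral edge $1+\sqrt c$, and for each $1\le j\le r$ the $j$-th largest singular value of $\bB+\tfrac{1}{\sqrt q}\bG$ converges in probability to $1+\sqrt c$ when $\sigma_j(\bB)\le\sqrt[4]{c}$ and to the explicit outlier location $s(\sigma_j(\bB))>1+\sqrt c$ when $\sigma_j(\bB)>\sqrt[4]{c}$, where $s(\cdot)$ is the outlier map furnished by that theory. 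The special case $q=n$ is then immediate: $\bY=\bI$ is semi-orthogonal, $\bX\bY^T=\bX=\bB+\tfrac1{\sqrt n}\bE$, and the same reasoning applies verbatim with $c=\lim m/n$.

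The reduction in the first two steps is routine; the real care is in the last step, namely matching conventions: one must check that the $\tfrac{1}{\sqrt q}$ scaling together with per-entry variance $\sigma^2$ places the bulk edge exactly at $1+\sqrt c$ and the detectability threshold exactly at $\sqrt[4]{c}$, and that the outlier map $s$ — and the strict inequality $s(\sigma_j(\bB))>1+\sqrt c$ above threshold — is quoted in the correct parametrization. The genuinely deep content, namely the existence of the transition and the closed form of $s$, is imported from the random-matrix literature rather than reproved here, so this is the step where any hidden hypothesis of the cited theorems (e.g.\ the precise notion of convergence, or restrictions when $c>1$) would need to be verified against the present setting.
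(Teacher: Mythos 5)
Your proposal is correct and follows essentially the same route as the paper's proof: both reduce the general case to the spiked model $\bB+\tfrac{1}{\sqrt q}\bG$ by observing that $\bX\bY^T=\bB+\tfrac{1}{\sqrt q}\bE\bY^T$ and that right-multiplication by the semi-orthogonal $\bY^T$ preserves the i.i.d.\ Gaussian structure of the noise (the paper phrases this via the matrix normal distribution, you via extending $\bY$ to a full orthogonal matrix), and both then import the phase-transition result of Shabalin--Nobel. The only cosmetic difference is ordering --- the paper states the $\bY=\bI$ special case first with the explicit formula $s(\sigma)=\sqrt{1+\sigma^2+c+c/\sigma^2}$ and checks $s>1+\sqrt c$ by monotonicity, whereas you perform the reduction first and leave $s$ abstract --- so the substance is the same.
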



\cref{Rudelson} comes directly from \citep{Rudelson2010NAT}, and \cref{Shabalin} is closely related to the result in \citep{Shabalin2013res}. Consider the reasonable penalty for $\bS$ in \cref{nuclearA}, i.e. $\bX_{m\times n}=\bS_{m\times n}+\bE_{m\times n}$. A set of reasonable tuning parameters will 
(1) detect the low-rank signals and (2) not capture components that are solely due to noise.  Considering Propositions~\ref{Rudelson} and \ref{Shabalin}, setting $\lambda = \sigma(\sqrt{m}+\sqrt{n})$ is reasonable because it only keeps the signals (top $r$ components) whose singular values are expected to be greater than those of independent random noise.
Consider the reasonable penalty for $\bB$ in \cref{nuclearAY}, i.e. $\bX_{m\times n}=\bB_{m\times q}\bY_{q\times n}+\bE_{m\times n}$. Following a similar argument, we set $\lambda = \sigma(\sqrt{m}+\sqrt{q})$.

In practice, after normalizing raw data as described in Appendix~\ref{app:scaling}, the noise variance for $\bX_{\cdot}$ is $1$ ($\sigma = 1$) and each $\bY^{(k)}$ are semi-orthogonal.  Thus, in order to distinguish true signals $\{\bB_k\}^K_{k=1},\{\bS_{\cdot}^{(l)}\}^L_{l=1}$ from Gaussian noise in the objective (\ref{ObjZ}), we fix $\lambda_B^{(k)} = \sqrt{p}+\sqrt{q}$ for any module $\bB_k, k=1,...,K$ and $\lambda_S^{(l)} = \sqrt{p}+\sqrt{\sum_{j=1}^J n_j\bC_S[j,l]}$ for any module $\bS_{\cdot}^{(l)},l=1,...,L$. This directly extends our choices for a single matrix, as estimating any given module $\bB^{(k)}$ or $\bS_{\cdot}^{(l)}$ with the others fixed reduces to the setting of the previous propositions.  



\section{Estimation}


In practice we scale $\bX$ \citep{gavish2017optimal} and orthogonalize $\bY$ prior to optimization. 
The details of this procedure are provided in Appendix~\ref{app:scaling}, and a simulation study illustrating its advantages is provided in Appendix~\ref{app:additional_sims}.


\subsection{Optimization}
\label{sec:opt}
 We estimate all regression coefficients $\bB$ and auxiliary variation sources $\bS$ simultaneously, via alternating optimization approaches for either formulation \eqref{ObjZ} or \eqref{ObjUV} of our objective. 
 For objective \eqref{ObjUV}, the introduction of $\bU$ and $\bV$ can make the optimization algorithm more efficient because the objective function has a closed-form gradient. Given all other estimates, we update every single $\bU_B^{(k)},\bV_B^{(k)},\bU_S^{(l)},\bV_S^{(l)}$ by setting its corresponding gradient to be zero. The details are provided in Algorithm 1.  
\begin{algorithm}[]
    \caption{Alternating Least Square with Matrix Decomposition} \label{ALSUV}
  \begin{algorithmic}[1]
    \INPUT Covariates $\bY$ and corresponding multivariate outcomes $\bX$; penalizing terms $\lambda_B, \lambda_S$; binary indicator matrices $\bC_Y, \bC_S$
    \OUTPUT $\bB,\bS$
    \STATE \textbf{Initialization} Construct $\{\bY_{\cdot}^{(k)}\}_{k=1}^K$ based on $\bC_Y$. Assign initialized numbers for each entry of $\{\bU_B^{(k)},\bV_B^{(k)}\}^K_{k=1}, \{\bU_S^{(l)},\bV_S^{(l)}\}^L_{l=1}$ 
    \WHILE{convergence criterion does not meet}
    \FOR{$k=1,...,K$}
      \STATE Compute the residual matrix $\bX_{\cdot}^{(k)} = \bX_{\cdot} - \sum^{K}_{k'=1,k'\neq k} \bU_B^{(k')}\bV_B^{(k')T} \bY_{\cdot}^{(k')} - \sum^{L}_{l=1}  \bU_S^{(l)}\bV_S^{(l)T}$
      \STATE Update $\bU_B^{(k)} = \bX_{\cdot}^{(k)}\bY_{\cdot}^{(k)T}\bV_B^{(k)}(\bV_B^{(k)T}\bY_{\cdot}^{(k)}\bY_{\cdot}^{(k)T}\bV_B^{(k)} + \lambda_B^{(k)}\bI_{r_B})^{-1}$
      \STATE Update $vec(\bV_B^{(k)}) = [(\bU_B^{(k)T}\bU_B^{(k)}) \bigotimes (\bY_{\cdot}^{(k)}\bY_{\cdot}^{(k)T}) + \lambda_B^{(k)}\bI_{q * r_B} ]^{-1} vec[\bY_{\cdot}^{(k)}(\bX_{\cdot}^{(k)T})\bU_B^{(k)}]$
      \STATE Transform $vec(\bV_B^{(k)})$ to $\bV_B^{(k)}$
      \ENDFOR
     \FOR{$l=1,..,L$}
      \STATE Compute the residual matrix $\bX_{\cdot}^{(l)} = \bX_{\cdot} - \sum^{K}_{k=1} \bU_B^{(k)}\bV_B^{(k)T} \bY_{\cdot}^{(k)} - \sum^{L}_{l'=1,l'\neq l}  \bU_S^{(l')}\bV_S^{(l')T}$
      \STATE Set $\bX_{j}^{(l)} = \boldmath{0}$ where $\bC_s[j,l]=0$ for $j=1,...,J$
    \STATE Update $\bU_S^{(l)} = \bX_{\cdot}^{(l)}\bV_S^{(l)}(\bV_S^{(l)T}\bV_S^{(l)} + \lambda_S^{(l)}\bI_{r_S})^{-1}$
      \STATE Update $\bV_S^{(l)} = \bX_{\cdot}^{(l)T}\bU_S^{(l)}(\bU_S^{(l)T}\bU_S^{(l)} + \lambda_S^{(l)}\bI_{r_S})^{-1}$
     \ENDFOR
    \ENDWHILE
     \STATE Set $\bB_k=\bU_B^{(k)}\bV_B^{(k)T}$ for all $k=1,..,K$, and $\bS_{\cdot}^{(l)}=\bU_S^{(l)}\bV_S^{(l)T}$ for all $l=1,..,L$
  \end{algorithmic}
\end{algorithm}

The symbol $\bigotimes$ means Kronecker product.

 Note that Algorithm 1 does not require the columns of $\bY_{\cdot}^{(k)}$ to be orthogonal. When $\bY_{\cdot}^{(k)}$ is semi-orthogonal, in light of \cref{nuclearA} and \cref{nuclearAY}, we develop an alternative approach based on iterative soft-singular value thresholding estimators for \eqref{ObjZ} in Algorithm 2.

\begin{algorithm}[H]
    \caption{Alternating Least Square with Soft-threshold Estimators} \label{ALSBS}
  \begin{algorithmic}[1]
    \INPUT Orthogonal covariates $\bY$ and corresponding multivariate outcomes $\bX$; penalizing terms $\lambda_B, \lambda_S$; binary indicator matrices $\bC_Y, \bC_S$
    \OUTPUT $\bB,\bS$
    \STATE \textbf{Initialization} Construct $\{\bY_{\cdot}^{(k)}\}_{k=1}^K$ based on $\bC_Y$. Assign initialized numbers for each entry of $\{\bB_k\}^K_{k=1},\{\bS_{\cdot}^{(l)}\}^L_{l=1}$
    \WHILE{convergence criterion does not meet}
    \FOR{$k=1,..,K$}
    \STATE Compute the residual matrix $\bX_{\cdot}^{(k)} = \bX_{\cdot} - \sum^{K}_{k'=1,k'\neq k} \bB_{k'} \bY_{\cdot}^{(k')} - \sum^{L}_{l=1}  \bS_{\cdot}^{(l)}$
    \STATE Compute the SVD of $\bX_{\cdot}^{(k)}\bY_{\cdot}^{(k)T}$, i.e. $\bX_{\cdot}^{(k)}\bY_{\cdot}^{(k)T} = \bL_B^{(k)}\bD_B^{(k)}\bR_{B}^{(k)}$
    \STATE Update $\bB_{k} = \bL_{B}^{(k)}\widehat{\bD}_B^{(k)}\bR_{B}^{(k)}$ where $\widehat{\bD}_B^{(k)}$ is a diagonal matrix with $\widehat{\bD}_B^{(k)}[r,r] = max(\bD_B^{(k)}[r,r]-\lambda_B^{(k)},0)$ for $r=1,2,...$ on its diagonal entries and zero otherwise
    \ENDFOR
    \FOR{$l=1,..,L$}
    \STATE Compute the residual matrix $\bX_{\cdot}^{(l)} = \bX_{\cdot} - \sum^{K}_{k=1} \bB_{k'} \bY_{\cdot}^{(k')} - \sum^{L}_{l=1,l'\neq l}  \bS_{\cdot}^{(l')}$
    \STATE Set $\bX_{j}^{(l)} = \boldmath{0}$ where $\bC_s[j,l]=0$ for $j=1,...,J$
    \STATE Compute the SVD of $\bX_{\cdot}^{(l)}$, i.e. $\bX_{\cdot}^{(l)} = \bL_{S}^{(l)}\bD_S^{(l)}\bR_{S}^{(l)}$
    \STATE Update $\bS_{\cdot}^{(l)} = \bL_{S}^{(l)}\widehat{\bD}_S^{(l)}\bR_{S}^{(l)}$ where $\widehat{\bD}_S^{(l)}$ is a diagonal matrix with $\widehat{\bD}_S^{(l)}[r,r] = max(\bD_S^{(l)}[r,r]-\lambda_S^{(l)},0)$ for $r=1,2,...$ on its diagonal entries and zero otherwise
    \ENDFOR  
    \ENDWHILE
  \end{algorithmic}
\end{algorithm}

 For both algorithms, we use the same convergence criteria to decide whether to stop the optimization process:
    $\sum^{K}_{k=1} ||\widehat{\bB}_k - \widetilde{\bB}_k ||_F^2 + \sum^{L}_{l=1} || \widehat{\bS}_{\cdot}^{(l)} - \widetilde{\bS}_{\cdot}^{(l)}||_F^2 < \epsilon$,
where $\widehat{}$ denotes the estimation in the current epoch and \, $\widetilde{}$ \,  denotes the estimation in the previous epoch. It is also reasonable to use convergence of the loss function as the criteria.



In practice both algorithms have different strengths and weaknesses. Theoretically, Algorithm 2 can be used only when we orthogonalize the original $\bY$, because otherwise soft-thresholding to update $\bB$ is not possible.  In general, we find that the algorithms require similar computation time to achieve the same convergence criterion: Algorithm 1 tends to require less time if the true ranks (and accompanying maximum ranks specified, i.e. $r_{B,upper}$ and $r_{S,upper}$) are small, while Algorithm 2 is quicker and consumes less computational resources when the true rank and maximum ranks specific for Algorithm 1 are large.


\subsection{Missing data imputation}

One of the main uses of our proposed method is to impute various types of missing data. Based on the assumption that the abundance of existing entries provides sufficient information to uncover the global structures (both covariate and auxiliary effects) and therefore, to estimate the 
values of absent entries.  Denote the set of indexes of all missing entries as $\textit{M}$. Our iterative imputation process is as follows: (1) Initialize $\widetilde{\bX}_{\cdot}$ by 
$ \widetilde{\bX}_{\cdot} [m,n] = 
  \bX_{\cdot}[m,n] $ if  $[m,n] \notin \textit{M}$, otherwise 0;
  (2) Estimate $\{\bB_k\}^K_{k=1},\{\bS_{\cdot}^{(l)}\}^L_{l=1}$ by Algorithm 1 or 2 with current $\widetilde{\bX}_{\cdot}$;
  (3) Update $\widetilde{\bX}_{\cdot}$ by setting $\widetilde{\bX}_{\cdot} [m,n] = (\sum^{K}_{k=1} \bB_k \bY_{\cdot}^{(k)} + \sum^{L}_{l=1}  \bS_{\cdot}^{(l)})[m,n]$ for all $[m,n] \in \textit{M}$;
  (4) Back to (2) unless convergence; the final $\widetilde{\bX}_{\cdot}$ is the imputation result.
This can be considered a modified EM-algorithm, and is similar to the approach used for softImpute \citep{mazumder2010spectral} for nuclear-norm penalized imputation of a single matrix with no covariates.


\section{Simulations}

\subsection{Recovery of true structure for special cases}

Here, we present simulations as proof-of-concept for two novel scenarios within our approach: (i) simultaneous modeling of covariate effects and auxiliary low-rank variation and (ii) simultaneous modeling of shared or specific covariate effects across multiple cohorts.  

For (i), we consider a single data matrix $\bX: 100 \times 100$ and single set of covariates $\bY: 10 \times 100$ and generate data via $\bX=\bB \bY+\bS+\bE,$ where $\bB \bY$ is covariate-driven variation, $\bS$ is auxiliary structured variation, and $\bE$ is error.  The coefficient array $\bB$ has rank $R_y$ via $\bB=a \bU_B \bV_B^T$ where $\bU_B: 100 \times R_y$ and $\bV_B: 10 \times R_y$, and $\bS$ has rank $5$ via $\bS=b \bU_S \bV_S$ where $\bU_S: 100 \times 5$ and $\bV_S: 5 \times 100$. The entries of $\bE$, $\bY$, $\bU_B$, $\bV_B$, $\bU_S$ and $\bV_S$ are all generated independently from a standard normal distribution. We consider $R_y=1$ or $R_y=5$, and consider three conditions with different signal strength for each term by adjusting $a$ and $b$: sd$(\bB \bY)=0.5$ and sd$(\bS)=5$ ($||\bB \bY||/||\bS||=0.1$,    sd$(\bB \bY)=\text{sd}(\bS)=1$ ($||\bB \bY||/||\bS||=1$), and sd$(\bB \bY)=5$ and sd$(\bS)=0.5$ ($||\bB \bY||/||\bS||=10$).  For each set of conditions, we estimate $\bB$ and $\bS$ using four approaches: 
(1) Augmented reduced rank regression (aRRR), our proposed approach as described in \cref{sec:obejective}, \textcolor{black}{given 10 as the rank upper bound for $\bB$ and $\bS$}.
(2) Supervised singular value decomposition (SupSVD) \citep{LI2016supSVD}, a related model of the form  $\bX = \bY\bB\bV^T + \bF\bV^T + \bE$ for one cohort, where $\bF$ is the matrix of latent variables that correspond to auxiliary variation not related to the covariates, estimated using maximum likelihood and given the true rank of $\bB\bV^T$ and $\bF\bV^T$.
(3) Two-stage least squares, in which the coefficients $\bB$ is determined by ordinary least squares regression and $\bS$ is determined by an SVD approximation with the true rank ($R=5$) on the residuals $\bX-\hat{\bB} \bY$.
(4) Two-stage nuclear norm (NN), in which $\bB$ is determined by an NN-penalized reduced rank regression and $\bS$ by a NN-penalized matrix approximation to the residuals $\bX-\hat{\bB} \bY$. For each method, we compute the relative mean squared error (MSE) for $\bB$ and $\bS$, e.g., $||\bB-\hat{\bB}||_F^2/||\bB||_F^2$. Average relative MSEs for each condition, over $100$ replications, are shown in Table~\ref{tab: sim_results}A.  This demonstrates clear advantages of a nuclear norm penalty on $\bB$, and the dramatic advantage of aRRR when the auxiliary signal $\bS$ is strong.  The latter point is critical, because molecular data typically have a large amount of structured variation that is driven by coordinated biological processes or other latent effects; it is common for such variation to be stronger than the signal of interest (i.e., $\bB \bY$), yet it is not systematically adjusted for in practice. 
   
For scenario (ii), we generate data $\{\bX_j: 100 \times 100, \bY_j: 10 \times 100\}$ via $\bX_j=(\bB+\bB_j)\bY+\bE$ for two cohorts $j \in \{1, 2\}$.  Here, $\bB_j$ are covariate effects specific to cohort $j$ and $\bB$ are shared effects. The coefficient arrays are generated via $\bB=a \bU_B \bV_B^T$,  $\bB_1=b \bU_{B_1} \bV_{B_1}^T$, and  $\bB_2=b \bU_{B_2} \bV_{B_2}^T$ where $\{\bU_B, \bU_{B_1}, \bU_{B_2}\}$ are each $100 \times R_y$ and $\{\bV, \bV_{B_1}, \bV_{B_2}\}$ are each $R_y \times 10$.  The entries of $\{\bE, \bY, \bU_B, \bU_{B_1}, \bU_{B_2}, \bV_B, \bV_{B_1}, \bV_{B_2}\}$ are each generated independently from a standard normal distribution. We consider $R_y=1$ or $5$, and three conditions with different signal strength for each term by adjusting $a$ and $b$: $a=2$ and $b=0.2$ $(||\bB||/||\bB_i||=10)$, $a=b=1$ ($(||\bB||/||\bB_i||=1)$, and $a=0.2$ and $b=2$ ($||\bB||/||\bB_i||=0.1$).   For each set of conditions, we estimate $\bB$, $\bB_1$ and $\bB_2$ for $J=2$ via maRRR with no auxiliary terms $\bS$, termed multi-cohort reduced rank regression (mRRR). Table~\ref{tab: sim_results}B shows average relative MSEs of $\bB$ and the $\bB_i$'s for mRRR in comparison to two-stage approaches analogous to those described previously.    The mRRR approach can effectively recover shared and cohort specific effects, with dramatic improvement over ad-hoc multi-step approaches.  

 \begin{table}[H]
 \caption{Relative MSE for scenarios assessing aRRR (\textbf{A}) and mRRR (\textbf{B}). Values that smaller than 0.01 are round to 0.01. The bold number represents the lowest value in a row.}	
\label{tab: sim_results}

\centering
\begin{tabular}{|r r|cc|cc|cc|cc|}
\hline
             \multicolumn{2}{|c|}{\textbf{A}} & \multicolumn{2}{|c|}{aRRR}                      & \multicolumn{2}{|c|}{SupSVD}                    & \multicolumn{2}{|c|}{Two-stage LS}              & \multicolumn{2}{|c|}{Two-stage NN}              \\ 
             \hline
$\frac{||\bB\bY||}{||\bS||}$ & $R_y$ & \multicolumn{1}{c}{$\bB$} & \multicolumn{1}{c|}{$\bS$} & \multicolumn{1}{c}{$\bB$} & \multicolumn{1}{c|}{$\bS$} & \multicolumn{1}{c}{$\bB$} & \multicolumn{1}{c|}{$\bS$} & \multicolumn{1}{c}{$\bB$} & \multicolumn{1}{c|}{$\bS$} \\ \hline
10           & 1    & $\bold{0.01} $                & 0.61                  & 0.01                  & $\bold{0.46}$                 & 0.01                  & 0.60                  & 0.01                  & 0.61                  \\
1            & 1    & $\bold{0.04}$                  & 0.22                  & 0.13                  & $\bold{0.19}$                  & 0.22                  & 0.21                  & 0.05                  & 0.25                  \\
0.1          & 1    & $\bold{0.17}$                  & $\bold{0.01}$                  & 10.97                 & 0.10                  & 11.44                 & 0.11                  & 7.45                  & 0.08                  \\
10           & 5    & $\bold{0.01}$                  & 0.63                  & 0.01                  & $\bold{0.48}$                  & 0.01                  & 0.60                  & 0.01                  & 0.63                  \\
1            & 5    & $\bold{0.14}$                  & 0.24                  & 0.17                  & $\bold{0.19}$                  & 0.23                  & 0.21                  & 0.15                  & 0.26                  \\
0.1          & 5    & $\bold{0.40}$                  & $\bold{0.01}$                  & 11.31                 & 0.10                  & 11.66                 & 0.11                  & 7.83                  & 0.08                  \\ \hline
\end{tabular}

\begin{tabular}{|r r|cc|cc|cc|}
\hline
	\multicolumn{2}{|c|}{\textbf{B}}& \multicolumn{2}{c|}{mRRR}&\multicolumn{2}{c|}{Two-stage LS}& \multicolumn{2}{c|}{Two-stage NN}\\ \hline
	$\frac{||\bB||}{||\bB_i||}$ & $R_y$ & $\bB$ & $\bB_i$ & $\bB$ & $\bB_i$ & $\bB$ & $\bB_i$\\
	\hline
	10 & 1 &$\bold{0.01}$ & $\bold{0.11}$ & 0.01 & 0.77 & 0.01 & 0.42 \\ 
	1& 1& $\bold{0.01}$ & $\bold{0.01}$ & 0.75 & 0.60 & 0.67 & 0.52  \\
	0.1& 1& $\bold{0.07}$ & $\bold{0.01}$ & 80.57 & 0.55 & 76.17 & 0.52 \\
	10 & 5 & $\bold{0.01}$ & $\bold{0.28}$ & 0.01 & 0.57 & 0.01 & 0.50\\
	1& 5& $\bold{0.08}$ & $\bold{0.08}$ & 0.49 & 0.54 & 0.45 & 0.50 \\
	0.1& 5& $\bold{0.49}$ & $\bold{0.01}$ & 54.79 & 0.56 & 52.41 & 0.54 \\
\hline
\end{tabular}
\end{table}

\subsection{Missing data imputation}
\label{missingsims}

In this simulation we assess the maRRR framework more broadly, with a focus on missing data imputation. Our general simulation procedure follows these steps:  1) complete data generation; 2) missingness assignment; 3) imputation analysis. In reality the true main signals may come from covariate effects or auxiliary structures and can be individual-level or shared across multiple cohorts. So we consider four fundamental scenarios: ($a$) large $\bB$, main signals from one global auxiliary structure which is shared by all cohorts; ($b$) large $\bS$, main signals from one global covariate effect which is shared by all cohorts; ($c$) large $\bB_i$, main signals from individual covariate effect of each cohort;  ($d$) large $\bS_i$, main signals from individual auxiliary structure in each individual cohort. In order to mimic the real situation, the number of samples and dimensions of the data is set to be the same as the TCGA data analyzed in Section~\ref{sec:data}. That is, $\bX$ consists of 1000 features and 6581 samples from 30 study cohorts and $\bY$ consists of 50 predictors. Therefore, the ground truth can be written as  $\bX_j = \bB\bY_j + \bB_j \bY_j + \bS_{shared,j} + \bS_j + \bE_j, j = 1,...,30$. In each simulation, the standard deviation for the main signals is set to be $\sqrt{10}$ while that of the remaining signals and random errors are set to be 1.  The complete data generation process is described in Appendix~\ref{app:data_gen}.

\begin{table}[]
\caption{Imputation relative squared error(RSE) under different methods and different types of missingness, simulated data is set to be large at only one type of modules. Missingness is set to be $5\%$ of the original $\bX$. The number of epochs for each method is set as 30. Each result is a mean of 10 replications. The standard error is less than 0.01 for all of the means shown. The bold number represent the lowest value in a column.}
\resizebox{\textwidth}{!}{
\begin{tabular}{llllll}
\hline
large\_B  & Method                    & missing entries & missing columns & missing rows & mean  \\ \hline
          & maRRR                     & $\bold{0.082}$           & $\bold{0.228}$           & $\bold{0.216}$        & $\bold{0.175}$ \\
          & BIDIFAC+                  & 0.085           & 1               & 0.231        & 0.439 \\
          & mRRR                      & 0.202           & 0.241           & 0.285        & 0.243 \\
          & aRRR, one all-shared      & 0.125           & 0.288           & 0.227        & 0.213 \\
          & aRRR, 30 separate         & 0.093           & 0.287           & 1.014        & 0.465 \\
          & NN reg, one all-shared    & 0.283           & 0.287           & 0.288        & 0.286 \\
          & NN reg, 30 separate       & 0.212           & 0.255           & 1            & 0.489 \\
          & NN approx, one all-shared & 0.127           & 1               & 0.229        & 0.452 \\
          & NN approx, 30 separate    & 0.096           & 1               & 1            & 0.699 \\ \hline
large\_S  & Method                    & missing entries & missing columns & missing rows & mean  \\ \hline
          & maRRR                     & $\bold{0.082}$           & $\bold{0.877}$           & $\bold{0.218}$        & $\bold{0.392}$ \\
          & BIDIFAC+                  & 0.085           & 1               & 0.225        & 0.437 \\
          & mRRR                      & 0.759           & 1.066           & 0.927        & 0.917 \\
          & aRRR, one all-shared      & 0.125           & 0.929           & 0.228        & 0.427 \\
          & aRRR, 30 separate         & 0.093           & 0.884           & 1.004        & 0.66  \\
          & NN reg, one all-shared    & 0.928           & 0.931           & 0.933        & 0.93  \\
          & NN reg, 30 separate       & 0.783           & 1.072           & 1            & 0.952 \\
          & NN approx, one all-shared & 0.127           & 1               & 0.23         & 0.452 \\
          & NN approx, 30 separate    & 0.096           & 1               & 1            & 0.699 \\ \hline
large\_Bi & Method                    & missing entries & missing columns & missing rows & mean  \\ \hline
          & maRRR                     & $\bold{0.083}$           & 0.262           & 0.901        & $\bold{0.415}$ \\
          & BIDIFAC+                  & 0.086           & 1               & $\bold{0.867}$        & 0.651 \\
          & mRRR                      & 0.204           & $\bold{0.246}$           & 0.928        & 0.459 \\
          & aRRR, one all-shared      & 0.149           & 0.913           & 0.902        & 0.655 \\
          & aRRR, 30 separate         & 0.093           & 0.287           & 1.013        & 0.464 \\
          & NN reg, one all-shared    & 0.896           & 0.899           & 0.964        & 0.92  \\
          & NN reg, 30 separate       & 0.212           & 0.252           & 1            & 0.488 \\
          & NN approx, one all-shared & 0.151           & 1               & 0.907        & 0.686 \\
          & NN approx, 30 separate    & 0.096           & 1               & 1            & 0.699 \\ \hline
large\_Si & Method                    & missing entries & missing columns & missing rows & mean  \\ \hline
          & maRRR                     & $\bold{0.083}$           & $\bold{0.873}$           & $\bold{0.861}$        & $\bold{0.606}$ \\
          & BIDIFAC+                  & 0.086           & 1               & 0.866        & 0.651 \\
          & mRRR                      & 0.76            & 1.066           & 0.928        & 0.918 \\
          & aRRR, one all-shared      & 0.148           & 0.93            & 0.906        & 0.661 \\
          & aRRR, 30 separate         & 0.093           & 0.885           & 1.003        & 0.661 \\
          & NN reg, one all-shared    & 0.929           & 0.931           & 0.935        & 0.932 \\
          & NN reg, 30 separate       & 0.784           & 1.072           & 1            & 0.952 \\
          & NN approx, one all-shared & 0.15            & 1               & 0.91         & 0.687 \\
          & NN approx, 30 separate    & 0.096           & 1               & 1            & 0.699 \\\hline
\end{tabular}
}

\label{tab:sim_missing}
\end{table}

To mimic the various types of missingness that are encountered in reality, we conduct simulations in which four types of missingness are considered: (i) missing entries, (ii) missing columns, (iii) missing rows, and (iv) a balanced mix of these three types of missingness as the average of the results of those three types. Missingness is set to be 5$\%$ of the original data for the assumption that adequate information is provided for revealing global structures. All missing indices are randomly selected. Denote $\widetilde{\bX}_{\cdot}$ as the estimate for true observation $\bX_{\cdot}$, based on non-missing entries. We define the relative squared error (RSE) for missing data imputation as
    $RSE = \frac{\sum_{(m,n)\in\textit{M}}(\bX_{\cdot}[(m,n)] - \widetilde{\bX}_{\cdot}[(m,n)])^2}{\sum_{(m,n)\in\textit{M}}(\bX_{\cdot}[(m,n)] )^2}$.

We compare our method (maRRR with true 31 modules) with the following approaches:
    (1) BIDIFAC+ with 31 modules, i.e. only auxiliary variation structure $\bS$;
    (2) mRRR with 31 modules, i.e. only covariate-related structure $\bB \bY$;
    (3) aRRR with only one module for all cancer types' cohorts together;
    (4) aRRR separately on each cancer type's cohort;
    (5) nuclear norm regression (without $\bS$) of $\bX_{\cdot}$ on $\bY_{\cdot}$, i.e. mRRR with one all-shared module;
    (6) nuclear norm regression (without $\bS$) of $\bX_{j}$ on $\bY_{j}$ separate for each cancer type, i.e. mRRR with 30 individual modules;
    (7) nuclear norm approximation (without $\bB\bY$) for all cancer types together
    (8) nuclear norm approximation (without $\bB\bY$) for each cancer types separately.

Based on the simulation results shown in Table~\ref{tab:sim_missing}, in terms of the average performance, our proposed method maRRR has the lowest RSE. In particular, maRRR has a very close RSE to the best result in the case of missing columns or rows under large individual covariate signals, and it performs the best in all other cases. BIDIFAC+ performs slightly worse than our proposed method while there are only missing entries or rows, but it cannot utilize any covariate information to impute in the case of missing columns. The models only considering covariate effects (mRRR and nuclear norm regression) cannot predict accurately when the auxiliary variation is large, no matter global or individual. On the contrary, the models without considering covariates (BIDIFAC+, nuclear norm approximation) can perform reasonably well in the cases of missing entries or rows since the variation from covariates may be counted into that of auxiliary structures. The special case of our proposed method, aRRR (for one cohort only), though worse than maRRR, has lower MSE than many other existing methods. 


\subsection{Computation}
Our proposed method is computationally efficient. For a matrix of size 1000 $\times$ 6581, the average computational cost per epoch is 45 seconds for Algorithm 1 and 50 seconds for Algorithm 2. A comprehensive comparison of computation times for all methods utilized in this study is provided in Appendix~\ref{app:computation}.



\section{Real Data Analysis}
\label{sec:data}
\subsection{Data description}

We consider data from the \textcolor{black}{Cancer Genome Atlas (TCGA)} Pan-Cancer Project \citep{hoadley2018cell}. \textcolor{black}{TCGA is an NIH-sponsored initiative to molecularly characterize cancer tissue samples obtained from hundreds of sites worldwide.}     We used data for 6581 tumor samples \textcolor{black}{from distinct individuals} representing 30 different cancer types (i.e., 30 cohorts). \textcolor{black}{The number of samples per cancer type ranges from 57 samples for uterine carcinosarcoma (UCS) to 976 for breast carcinoma (BRCA).} As outcomes, we consider \textcolor{black}{gene expression data obtained from Illumina RNASeq platforms and normalized as described in \citet{hoadley2018cell}}. We filter to the 1000 genes that have the highest standard deviation, yielding $\bX_{\cdot}: 1000 \times 6581$. We filter to the 50 somatic most common somatic mutations (1=mutated and 0=not mutated)  as covariates $\bY_{\cdot}: 50 \times 6581$.  Data are standardized and orthogonalized as in Appendix~\ref{app:scaling}.  Further details on the data are available in Appendix~\ref{app:names} and \ref{app:data_process}.

\subsection{Decomposition results}
\label{sec:decomposition}

We first apply the optimization with dynamic modules for BIDIFAC+ \citep{Lock2022BADIFACplus}, to uncover 50 low-rank modules in $\bX_{\cdot}$. \textcolor{black}{This stepwise procedure iteratively determines modules of shared structure without consideration of any covariates (i.e., $\bC_S$).} 
Fifty was chosen as the upper bound because the variance explained by more modules than 50 was relatively inconsequential.  To distinguish how much variance is attributed to mutation effects, we set covariate-related module indicators $\bC_Y$ equal to $\bC_S$. We then apply maRRR to these $50$ modules to infer both mutation-driven and auxiliary structured variation, ($K=L=50$)  with penalties determined as in Section~\ref{sec:theory}.  

\begin{sidewaystable}[]
\caption{Cancer types and sources for the first 15 modules, ordered by variation explained by maRRR. Variance of $\bS_{\cdot}^{(i)}$/$\bB_i\bY^{(i)}_{\cdot}$/signal refers to total variance explained by $\bS_{\cdot}^{(i)}$/$\bB_i\bY^{(i)}_{\cdot}$/$\bB_i\bY^{(i)}_{\cdot}+\bS_{\cdot}^{(i)}$ estimated in maRRR. RSE of mRRR refers to relative square difference between $\bB_i\bY^{(i)}_{\cdot}$ estimated by maRRR and mRRR. RSE of BIDIFAC+ refers to relative square difference between signal ($\bB_i\bY^{(i)}_{\cdot}+\bS_{\cdot}^{(i)}$) estimated by maRRR and BIDIFAC+. Sample size refers to the number of samples used in the current module. All the number smaller than 0.01 are round to 0.01. For reference, the RSE for mRRR and maRRR is defined as $\frac{||\bB_{i,mRRR} - \bB_{i,maRRR}||_F^2}{||\bB_{i,mRRR} + \bB_{i,maRRR}||_F^2}$ $ = \frac{||\bB_{i,mRRR}\bY_{\cdot}^{(i)}-\bB_{i,maRRR}\bY_{\cdot}^{(i)}||_F^2}{||\bB_{i,mRRR}\bY_{\cdot}^{(i)} + \bB_{i,maRRR}\bY_{\cdot}^{(i)}||_F^2}$ since $\bY_{\cdot}^{(i)}$ is semi-orthogonal; the RSE for mRRR and maRRR is defined as $\frac{||\bS_{\cdot,BIDIFAC+}^{(i)} - \bS_{\cdot,maRRR}^{(i)}||_F^2}{||\bS_{\cdot,BIDIFAC+}^{(i)} + \bS_{\cdot,maRRR}^{(i)}||_F^2}$ for any $i =1,2,...,50$.}
\label{tab: top15mods}
\begin{tabular}{cccccccc}
\hline
\hline
Module & Sample  & Variance  & RSE of & Variance  & Variance & RSE of & Cancer  \\
($i$) & size & of $\bB_i\bY^{(i)}_{\cdot}$ &  mRRR & of $\bS_{\cdot}^{(i)}$ & of signal &  BIDIFAC+ & types \\ \hline                     \\
1      & 976         & 129652.09     & 0.22         & 1085292.83    & 1524117.23         & 0.01           & BRCA                               \\
2      & 400         & 167384.83     & 0.15         & 551269.19     & 992693.82          & 0.01           & THCA                               \\
3      & 433         & 67242.88      & 0.20         & 645839.92     & 975136.09          & 0.01           & GBM, LGG                           \\
4      & 331         & 49.52         & 0.97         & 730462.99     & 734935.73          & 0.01           & PRAD                               \\
5      & 195         & 7231.61       & 0.68         & 633034.84     & 724960.62          & 0.01           & LIHC                               \\
6      & 1029        & 137934.04     & 0.08         & 316746.84     & 656053.3           & 0.03           & BLCA, CESC, ESCA, HNSC, KICH, LUSC \\
7      & 820         & 137765.92     & 0.14         & 298195.14     & 629007.34          & 0.02           & COAD, ESCA, PAAD, READ, STAD       \\
8      & 179         & 20.61         & 1.00         & 396671.22     & 396882.04          & 0.01           & PCPG                               \\
9      & 170         & 53.04         & 0.98         & 347389.29     & 348257.22          & 0.01           & LAML                               \\
10     & 576         & 13530.5       & 0.33         & 231107.59     & 316968.61          & 0.04           & KIRC, KIRP                         \\
11     & 422         & 30138.96      & 0.21         & 192612.5      & 300435.91          & 0.03           & SKCM, UVM                          \\
12     & 6411        & 0.14          & 0.74         & 264645.95     & 264654.13          & 0.23           & All but *LAML*                     \\
13     & 211         & 63836.59      & 0.16         & 87452.29      & 251028.58          & 0.01           & COAD, READ                         \\
14     & 149         & 495.86        & 0.87         & 237614.82     & 250811.31          & 0.01           & TGCT                               \\
15     & 119         & 97.81         & 0.97         & 242978.25     & 243471.26          & 0.01           & THYM                            \\\hline  
\end{tabular}
\end{sidewaystable}

We order the modules by total variance explained by descending by maRRR estimates, i.e. \textcolor{black}{$||\widehat{\bB}_i \bY_{\cdot}^{(i)} + \widehat{\bS}_{\cdot}^{(i)}||_2^2, i = 1,...,50.$} The ordered result is shown in Table~\ref{tab: top15mods}. 
The top 3 modules by total variance explained are those with one or two cancers: BRCA, THCA and a combination of GBM and LGG (both neurological cancers), respectively. In general, auxiliary structures explained more variation than mutation-related structures, but their relative contribution varied widely across modules.  For example, Modules 6 and 7 have fairly comparable amount of variation explained by both mutation-related and -unrelated parts. Other modules have negligible mutation-driven variation, such as Module 12 which is shared by all but LAML. The large amount of low-rank variation unrelated to covariates demonstrates the importance of accounting for this auxiliary structure.  Moreover, the large amount of  covariate-related and -unrelated variation that is specific to one or a small number of cancer types demonstrates the importance of accounting for individual and partially-shared structures.  

\textcolor{black}{ We present a comparison of the BIDIFAC+ and mRRR estimates with those of our proposed maRRR in Table \ref{tab: top15mods}.
The total signal detected by maRRR closely aligns with that of BIDIFAC+, illustrating maRRR's ability to discern covariate effects while accounting for similar total variance. The covariate effects identified by mRRR resemble those by maRRR, particularly when the sample size is large. However, mRRR tends to estimate larger covariate effects, especially for smaller sample sizes.  This makes sense because the maRRR estimates are less prone to over-fitting by adjusting for unrelated structure. As the sample size decreases, the relative square errors (RSE) between estimates from the two methods increase. However, both methods generally agree when the mutation effects are negligible ($\leq$ 10000). For instance, both mRRR and maRRR reveal virtually no global mutation effects for Module 12 (even though the RSE is large due to the relative standardization).}   
\begin{figure}[]
        \centering
        \includegraphics[height=17cm,width=15cm]{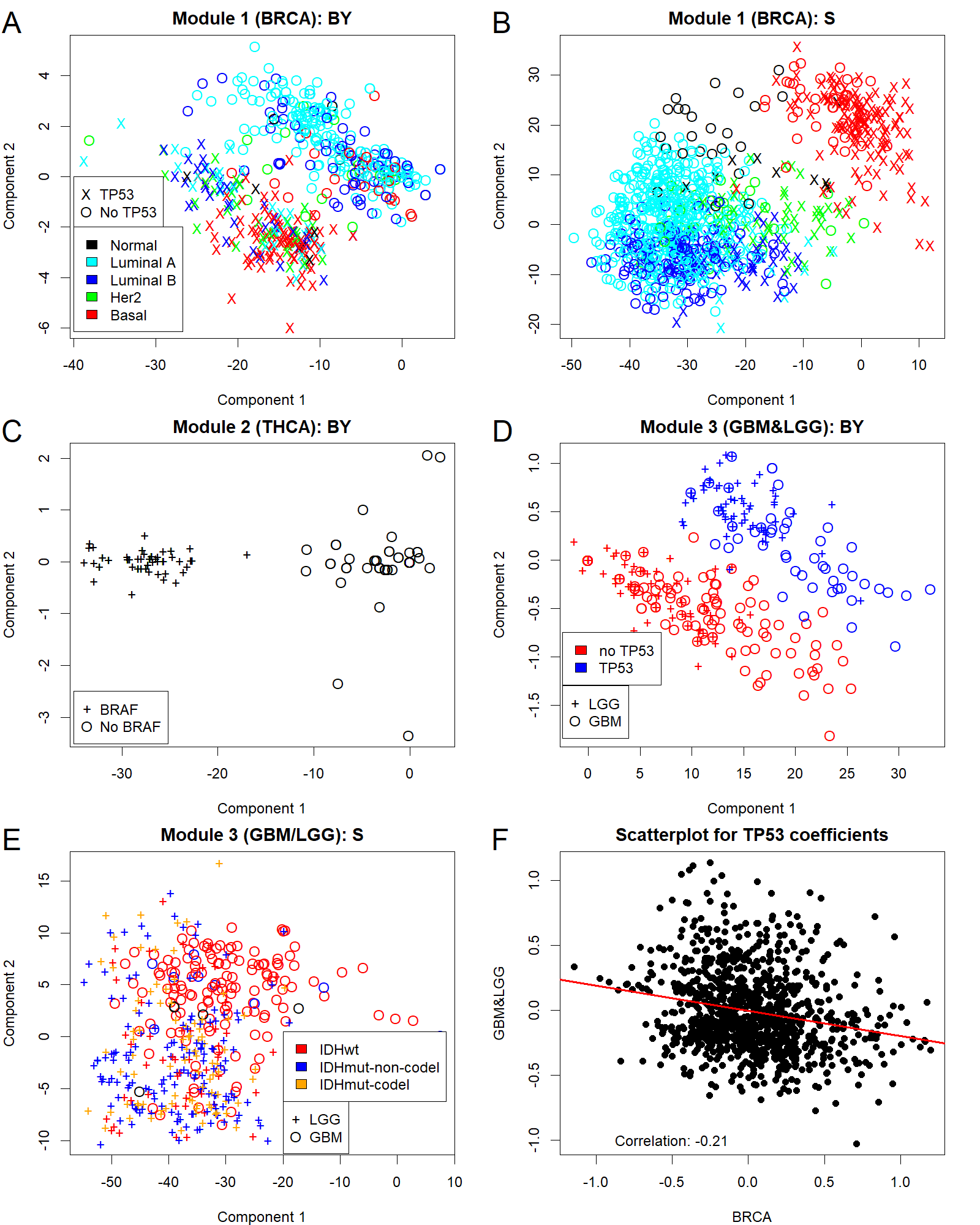}
        \caption{$\bA$: scores for the first two principle components of covariate-related variation ($\bB \bY)$ from Module 1 (BRCA); $\bB$: scores for the first two principle components of covariate-unrelated auxiliary variation ($\bS$) from Module 1 (BRCA), with symbols and colors showing TP53 mutation and 5 subtypes of BRCA respectively. $\bC$: scores for the first two principle components of covariate-related variation ($\bB \bY)$ from Module 2 (THCA), with symbols showing BRAF mutation. \textcolor{black}{Plot $\bA\&\bB$ contain 976 samples in the BRCA cohort respectively; Plot $\bC$ contains 400 samples in the BRCA cohort respectively.} $\bD$: scores for the first two principle components of covariate-related variation ($\bB \bY)$ from Module 3 (GBM\&LGG); $\bE$: scores for the first two principle components of covariate-unrelated auxiliary variation ($\bS$) from Module 3 (GBM\&LGG), with symbols and colors showing TP53 mutation and 3 IDH/codel subtypes of GBM\&LGG respectively. $\bF$: scatterplot for the regression coefficients for TP53 in module 1 and module 3. \textcolor{black}{Each point in Plot $\bD\&\bE$ represents one sample (150 for GBM and 283 for LGG); each point in Plot $\bF$ represents one gene (1000 in total).} }
        \label{fig:real_Data}
\end{figure}

Principal components plots of the first three modules (BRCA, THCA, GBM and LGG) 
are shown in Figure~\ref{fig:real_Data}. 
Figure~\ref{fig:real_Data}A displays mutation-related variation for the BRCA module, and samples are distinguished by whether they have a mutation in the TP53 gene or not. This makes sense, as TP53 is known to play a critical role in genomic activity for cancer and it is the  most frequently mutated gene in breast cancer \citep{olivier2010tp53}. 
From Figure~\ref{fig:real_Data}B,
we see that the mutation-unrelated structure is driven by the 5 intrinsic BRCA subtypes 
\citep{Cancer2012brca}:
Normal-like, Luminal A (LumA), Luminal B (LumB), HER2-enriched (HER2), and Basal-enriched (Basal) tumors. 
This makes sense, as the BRCA subtypes are known to be genomically distinct, but (as is apparent) do not have a direct correspondence to TP53 or other common mutations. 
From Figure~\ref{fig:real_Data}C, we observe that BRAF and non-BRAF groups are elegantly separated on the first principle component of mutation-driven variation for THCA, which explains much more variation than other components. This concurs with prior research indicating that the BRAF mutation defines a unique genomic and clinical subgroup within THCA patients \citep{Dolezal2021BRAF}.
 
 Figure~\ref{fig:real_Data}D\&E reveal substantial variation in both the GBM and LGG samples, as evidenced by the spread and intermingling of the ``+" 
and ``o" symbols. 
 This observation indicates that the top components identified by Module 3 explain substantial variation in both the GBM and LGG cancer cohorts, suggesting that it is indeed shared by the two cancer types.
The TP53 mutation drives separation of the mutation-driven structure, which makes sense as TP53 status is closely related to GBM and LGG aggressiveness \citep{ham2019tp53}. This was also observed in BRCA, however, from Figure~\ref{fig:real_Data}F we see that the TP53 regression coefficients from module 3 (GBM\&LGG) has little correlation to those from module 1 (BRCA). This demonstrates that while TP53 is an important somatic mutation related to different types of cancer, its effect on gene expression can differ dramatically  depending on the cancer type. In general, we find that the same somatic mutation plays different roles for different cohorts and modules. This embodies the necessity of the flexible modeling for different (combinations of) cohorts.  In fact, one interesting finding of our analysis is that the effect of somatic mutations on gene expression are almost not entirely shared across different types of cancer.  For example, for the module that is shared across almost all cancers (module 12) the  mutation-driven component is negligible. This observation is further supported by our analysis in Appendix~\ref{app:pan_cancer} and Section~\ref{s:real_missing}.

\subsection{Missing data imputation}
\label{s:real_missing}

Similar to Section~\ref{missingsims}, we compare our proposed maRRR with other relevant methods under four types of missingness for these data.  Beyond the aforementioned eight methods in Section~\ref{missingsims}, we added (9) linear least squares regression to predict $\bX$ from $\bY$ for all cancer types together and (10) linear least-squares regression for each cancer types separately. Note that maRRR, BIDIFAC+ and mRRR are based on the detected 50 modules. Results are provided in Table~\ref{tab:real_data_missing}.  
In the scenario of missing entries,
both maRRR and BIDIFAC+ have the lowest RSE, with similar values; both methods allow for an efficient decomposition of joint and individual structures.  In the case of missing columns (some samples' entire outcomes are missing), 
the methods that do not consider mutations and only consider $\bS$ (BIDIFAC+, NN approx) have no predictive power, which is expected because the mutation data is needed to inform predictions if no gene expression is available for a sample.
Here the methods that consider both $\bB$ and $\bS$ (maRRR, aRRR) are suboptimal compared to those methods that only consider mutation-driven variation ($\bB\bY$), indicating that for these data allowing for auxiliary variation does not improve column-wise predictions.   
Moreover, methods that allow for separate mutation-driven structure across the cohorts perform substantially better, which is consistent with the fact that there were more individual modules in our analysis and mutation-driven variation was generally not shared (Table~\ref{tab: top15mods}). 
In the event of missing rows (each cohort misses random features), methods that consider individual structure only do not perform well, as they cannot leverage shared structure when a gene is entirely missing within a cohort.
On the contrary, methods with only one all-shared module (NN approx and aRRR) perform well. Here, maRRR also performs reasonable well, as including several individual modules does not limit 
its performance. In this case methods considering covariate effects only (mRRR, NN reg)  do not perform well, as they tend toward estimates of zero (i.e., no predictions) to minimize squared error loss. 
 NN approx is slightly better than aRRR., perhaps because it does not consider covariate-driven variation.

Under the circumstances of a balanced mix of missingness for different conditions, maRRR has the best average recovering ability. This is largely because it is the most robust and flexible. Other comparable methods (BIDIFAC+, aRRR, NN approx) will have limited peformnance for at least one form of missingness. In reality, maRRR will be the most suitable for imputation since missingness is unpredictable and complex.

\begin{table}[]
\caption{Imputation relative squared error(RSE) under different methods and different types of missingness. Missingness is set to be $5\%$ of the original $\bX$. ``one all shared" means data for 30 groups are stacked together to form one matrix to analyze; ``30 separate" means each group has its only model. ``Missing entries" refers to missingness is entrywise; ``missing columns" means some samples' entire observation are missing; ``missing rows" means each group has several features entirely missing. ``N/A" means some specific method is not applicable. The bold number represents the lowest value in a column.}
\begin{tabular}{|l|cccc|}
\hline
Methods                   & \multicolumn{1}{l}{Missing entries} & \multicolumn{1}{l}{Missing columns} & \multicolumn{1}{l}{Missing rows} & \multicolumn{1}{l|}{Average} \\ \hline
maRRR                     & $\bold{0.233}$                               & 0.813                               & 0.600                          & $\bold{0.548}$                        \\
BIDIFAC+                  & $\bold{0.233}$                              & 0.999                               & 0.613                            & 0.615                        \\
mRRR                      & 0.603                               & $\bold{0.711}$                               & 0.998                            & 0.770                        \\
aRRR, one all-shared      & 0.261                               & 0.930                               & 0.487                            & 0.559                        \\
aRRR, 30 separate         & 0.376                               & 0.780                               & 1.001                            & 0.719                        \\
LS reg, one all-shared    & 0.908                               & 0.906                               & 0.899                            & 0.904                        \\
LS reg, 30 separate       & 0.560                               & N/A                                 & N/A                              & N/A                          \\
NN reg, one all-shared    & 0.912                               & 0.913                               & 1.032                            & 0.953                        \\
NN reg, 30 separate       & 0.599                               & 0.727                               & 1.000                            & 0.775                        \\
NN approx, one all-shared & 0.273                               & 1.000                               & $\bold{0.454}$                            & 0.576                        \\
NN approx, 30 separate    & 0.252                               & 1.000                               & 1.009                            & 0.754                        \\ \hline
\end{tabular}
\label{tab:real_data_missing}
\end{table}


\section{Discussion}

Two strengths of our proposed maRRR approach are its flexibility and versatility.  It is flexible because it accounts for various types of signals - covariate-driven, shared or unshared - without prior assumptions on the size or rank of these signals.   
It is versatile because it is capable of performing many tasks at once: e.g., dimension reduction, prediction and missing data imputation. These advantages are well-illustrated by our pan-cancer application, in which adequate amounts of variation are explained by different components and the patterns detected are both insightful and consistent with existing scientific research on cancer.

We focus on multi-cohort integration rather than multi-view (data on the same subjects from different sources) integration, in part because shared or unshared covariate effects are straightforward to interpret across multiple cohorts.
But one can still argue that in a multi-view (e.g., multi-omics) context each sample will have intrinsic underlying signals that will affect variables from different sources. Without loss of generality, this method can be adapted to analyze multi-view data as well. This is achieved by simply switching the way we integrate matrices: horizontally across shared rows or vertically across shared columns. A promising future direction is to extend maRRR to the bidimensional integration context, where the data are both multi-cohort and multi-view. \textcolor{black}{Another direction of future work is alternative empirical approaches to determine the module indicator matrices $\bC_Y$ and $\bC_S$, such as via an iterative stepwise selection procedure extending that in \citep{Lock2022BADIFACplus}.} 
While we have fixed the selection of penalty parameters $\lambda_B^{(k)},\lambda_S^{(l)}$ by employing random matrix theory, the parameters or the ranks of the underlying structures may be estimated empirically by a cross-validation procedure combined with a grid search.

Further theoretical developments are also a pertinent future direction, such as proving the convergence of our optimization algorithms to a global optimum and establishing sufficient conditions for the uniqueness of the solution. There is empirical evidence for both conjectures, as we find that the converged solution is the same with different initializations and for the two optimization algorithms considered. Moreover, Theorem 1 of \cite{Lock2022BADIFACplus} provides sufficient conditions for conditional uniqueness of the $\{\bS_{\cdot}^{(l)}\}^L_{l=1}$ given $\{\bB_k\}^K_{k=1}$, and vice-versa.  


\section*{Data Availability Statement}
The data that support the findings in this paper are provided via \href{https://www.dropbox.com/s/mkkuks54g62e9gy/30grps_50mods_est_0418_version2.RData?dl=0}{this RData file}.
 The user-friendly R package maRRR at \url{https://github.com/JiuzhouW/maRRR} performs all functions described herein, such as fitting models by the two algorithms in Section~\ref{sec:opt}, imputing missing values as in Section~\ref{missingsims}, generating penalties as in Section~\ref{sec:theory}, and generating data as in Appendix~\ref{app:data_gen}. For real data analysis, we provide all the model estimates as \href{https://www.dropbox.com/s/ub9zu5inxlbh6x5/30grps_50mods_est_0418_version2.RData?dl=0}{Rdata file} with detailed \href{https://www.dropbox.com/s/af743f8ocucz7p6/readme_modelFit.txt?dl=0}{notation explanations} and heatmaps for all module estimates in an \href{https://www.dropbox.com/s/891gvukhymr4a4b/30grps_50mods_est_0509.csv?dl=0}{online file}.

\section*{Acknowledgements}
We acknowledge support from NIH grant R01-GM130622 and helpful feedback from the Editors and two referees.


\newpage


\begin{appendices}
\section{Additional methodological details}

\subsection{Notation details}
\label{app:notation}
Detailed explanations for our proposed model~\ref{model} in the main manuscript are listed in Table~\ref{tab:notation}. 
\begin{table}[H]
    \caption{Notation for the proposed model.}
    \centering
    \begin{tabular}{c|c|c}
    \hline
      $J$   & Observed &  Number of cohorts \\
      $K$  & Pre-specified & Number of covariate effects \\
     $L$   & Pre-specified & Number of auxiliary structures \\
    $\bX_j$  & Observed & Outcome matrix for  $j$th cohort \\
    $\bY_j$  & Observed & Covariate matrix for  $j$th cohort \\
    $\bY_{\cdot}^{(k)}$   & Pre-specified &  Design matrix for $k$th covariate effect concatenated from all $J$ cohorts \\
    $\bB_k$ & Estimated  & coefficients for $k$th covariate effect\\
    $\bS_{\cdot}^{(l)}$ & Estimated & $l$th auxiliary structure concatenated from all $J$ cohorts \\
    $\bE_j$ & Estimated & Random error matrix for the $j$th cohort \\
    $\bC_Y$  & Pre-specified &Binary indicator matrix where its $[j,k]$th entry\\ 
    & & determines whether $j$th cohort is considered in $k$th covariate effect\\
    $\bC_S$  & Pre-specified & Binary indicator matrix where its $[j,l]$th entry \\ 
    & & determines whether $j$th cohort is considered in $l$th auxiliary structure\\
    $\bU_S^{(l)}$ & Estimated & Loading matrix of $l$th auxiliary structure\\
    $\bV_{Sj}^{(l)}$ & Estimated & Score matrix of $l$th auxiliary structure for $j$th cohort\\
    $\bU_B^{(k)}$ & Estimated & Loading matrix of $k$th covariate coefficients \\
    $\bV_{B}^{(k)}$ & Estimated & Score matrix of $k$th covariate coefficients\\
         \hline
    \end{tabular}
    \label{tab:notation}
\end{table}

\subsection{Construction of module indicator matrices}
\label{app:indicator}

The construction of module indicator matrices $\bC_Y$ and $\bC_S$ can be accomplished in various ways depending on the specific context and available prior knowledge. If there exists prior knowledge indicating shared effects among certain cohorts, it would be straightforward to define modules accordingly. For example, defining a global module and individual modules for each cohort, as illustrated in Section~\ref{missingsims}. In absence of such information, there are other practical methods that can be applied.

In scenarios where the number of cohorts is small, one can begin by enumerating all possible combinations of cohorts in $\bC_Y$ and $\bC_S$.  As the objective function encourages rank sparsity, modules with no true shared structure may be estimated as zero (i.e., no structure) even if they are included  in the algorithm. Moreover, after obtaining initial estimates, modules that explain a relatively higher amount of variance can be retained to form updated $\bC_Y$ and $\bC_S$.

When dealing with a large number of cohorts, an alternative approach would be to adopt a data-driven strategy like the ``Optimization algorithm: dynamic modules" section from \cite{Lock2022BADIFACplus}. This method is designed to select $\bC_S$ based on the amount of variance explained, after which $\bC_Y$ can be set to match $\bC_S$, thus partitioning the variance related to covariates. This is illustrated in Section~\ref{sec:decomposition}. Optionally as a second step, one could keep the modules that explain a high amount of variance to reformulate $\bC_Y$ and $\bC_S$. This can help identify the most significant components in the covariate-related effects and auxiliary structures.

\section{Proofs}
\label{app:proof}

\subsection{Proof of Theorem 1}

\begin{proof}
Consider the following lemma, the proof of which is provided in \cite{mazumder2010spectral}: \begin{lem}\label{ZtoUV}
(Mazumder et al., 2010)
For any matrix $\bZ: m\times n$ with   $rank(\bZ) = k $, $\forall  r\geq k$, the following holds:
\begin{align*}
    || \bZ ||_{*} & = \min_{\substack{\bU,\bV:\\\bZ = \bU_{m\times r}\bV_{n\times r}^T}} \frac{1}{2} (|| \bU ||_{F}^2 + || \bV ||_{F}^2)
\end{align*}
\end{lem}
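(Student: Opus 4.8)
The statement to establish is the factorization identity for the nuclear norm (\cref{ZtoUV}); although it is attributed to Mazumder et al., I sketch a self-contained argument. The plan is to prove two inequalities: an \emph{achievability} bound showing the minimum is at most $||\bZ||_*$, and a \emph{universal} bound showing that every admissible factorization satisfies $\frac12(||\bU||_F^2 + ||\bV||_F^2) \geq ||\bZ||_*$. Together these pin the minimum at $||\bZ||_*$ and, as a by-product, exhibit an explicit minimizer, so the minimum is indeed attained rather than merely approached.

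For the achievability bound, I would take the thin SVD $\bZ = \bL\bD\bR^T$, where $\bD$ is the $k\times k$ diagonal matrix of nonzero singular values and $\bL,\bR$ have orthonormal columns. Setting $\bU = \bL\bD^{1/2}$ and $\bV = \bR\bD^{1/2}$, each padded with $r-k$ zero columns so that it has exactly $r$ columns, gives a valid factorization $\bZ = \bU\bV^T$. Because $\bL^T\bL = \bR^T\bR = \bI$, a direct trace computation yields $||\bU||_F^2 = \operatorname{tr}(\bD) = ||\bV||_F^2 = ||\bZ||_*$, so $\frac12(||\bU||_F^2 + ||\bV||_F^2) = ||\bZ||_*$. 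This step uses the hypothesis $r \geq k$ in an essential but harmless way: the extra zero columns neither violate the factorization nor change the Frobenius norms, which is precisely why the value of the minimum is independent of $r$ for all $r \geq k$.

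For the universal lower bound, I would first apply the arithmetic--geometric mean inequality to reduce $\frac12(||\bU||_F^2 + ||\bV||_F^2) \geq ||\bU||_F\,||\bV||_F$, and then establish the sub-multiplicative bound $||\bU\bV^T||_* \leq ||\bU||_F\,||\bV||_F$ for arbitrary conformable factors. The cleanest route to the latter is the duality between the nuclear and spectral norms, namely $||\bZ||_* = \max_{||\bW||_{\mathrm{op}}\leq 1}\operatorname{tr}(\bW^T\bZ)$: writing $\operatorname{tr}(\bW^T\bU\bV^T) = \langle \bW^T\bU,\bV\rangle$ and applying Cauchy--Schwarz gives the chain $\langle \bW^T\bU,\bV\rangle \leq ||\bW^T\bU||_F\,||\bV||_F \leq ||\bW||_{\mathrm{op}}\,||\bU||_F\,||\bV||_F \leq ||\bU||_F\,||\bV||_F$. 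Taking the supremum over $\bW$ then yields $||\bU\bV^T||_* \leq ||\bU||_F\,||\bV||_F$, so every admissible factorization has objective value at least $||\bU\bV^T||_* = ||\bZ||_*$.

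I expect the main obstacle to be the sub-multiplicativity inequality $||\bU\bV^T||_* \leq ||\bU||_F\,||\bV||_F$, since it is the only nontrivial ingredient; proving it cleanly hinges on invoking the nuclear/spectral norm duality (equivalently, von Neumann's trace inequality) rather than attempting to manipulate the singular values of the product directly. Once that bound is in hand, combining the achievability and universal inequalities forces equality, and the explicit SVD-based construction confirms that the minimum is attained for every $r \geq k$, completing the proof.
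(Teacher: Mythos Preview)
Your proof is correct and complete. The paper itself does not prove this lemma: it simply states it and refers the reader to Mazumder et al.\ (2010) for the argument, so there is no in-paper proof to compare against. Your two-step strategy (achievability via the square-root-SVD factorization, and the universal lower bound via AM--GM combined with the nuclear/spectral duality $||\bZ||_* = \max_{||\bW||_{\mathrm{op}}\le 1}\operatorname{tr}(\bW^T\bZ)$) is the standard route and matches what one finds in Mazumder et al.; the only cosmetic difference is that some presentations replace the duality step with a direct singular-value argument, but your Cauchy--Schwarz chain is clean and correct.
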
 
Applying Lemma~\ref{ZtoUV} to each $\bB_k, k=1,...,K$ and $\bS_{\cdot}^{(l)}, l=1,...,L$: 
\begin{align*}
\min_{\{\bB_k\}^K_{k=1},\{\bS_{\cdot}^{(l)}\}^L_{l=1}} &\{ \frac{1}{2} ||\bX_{\cdot} - \sum^{K}_{k=1} \bB_k \bY_{\cdot}^{(k)} - \sum^{L}_{l=1}  \bS_{\cdot}^{(l)}||_F^2 + \sum^{K}_{k=1} \lambda_B^{(k)} ||\bB_k||_{*} +\sum^{L}_{l=1} \lambda_S^{(l)} ||\bS_{\cdot}^{(l)}||_{*}  \}\\
 =   \min_{\{\bB_k\}^K_{k=1},\{\bS_{\cdot}^{(l)}\}^L_{l=1}}& \{ \frac{1}{2}||\bX_{\cdot} - \sum^{K}_{k=1} \bB_k \bY_{\cdot}^{(k)} - \sum^{L}_{l=1}  \bS_{\cdot}^{(l)}||_F^2 + \\
  &  \sum^{K}_{k=1} \lambda_B^{(k)} \min_{\bU_B^{(k)},\bV_B^{(k)}:\bB_k=\bU_B^{(k)}\bV_B^{(k)T}}\frac{1}{2}(||\bU_B^{(k)}||_{F}^2 +||\bV_B^{(k)}||_{F}^2) +
    \\& \sum^{L}_{l=1} \lambda_S^{(l)}\min_{\bU_S^{(l)},\bV_S^{(l)}:\bS_{\cdot}^{(l)}=\bU_S^{(l)}\bV_S^{(l)T}}\frac{1}{2}( ||\bU_S^{(l)}||_{F}^2 + ||\bV_S^{(l)}||_{F}^2)  \}\\
 =   \min_{\substack{\{\bU_B^{(k)},\bV_B^{(k)}:\bB_k=\bU_B^{(k)}\bV_B^{(k)T}\}^K_{k=1},\\ \{\bU_S^{(l)},\bV_S^{(l)}:\bS_{\cdot}^{(l)}=\bU_S^{(l)}\bV_S^{(l)T}\}^L_{l=1}}} & \frac{1}{2}\{ ||\bX_{\cdot} - \sum^{K}_{k=1} \bU_B^{(k)}\bV_B^{(k)T} \bY_{\cdot}^{(k)} - \sum^{L}_{l=1}  \bU_S^{(l)}\bV_S^{(l)T}||_F^2 + \nonumber\\
    &\sum^{K}_{k=1} \lambda_B^{(k)} \min_{\bU_B^{(k)},\bV_B^{(k)}:\bB_k=\bU_B^{(k)}\bV_B^{(k)T}}(||\bU_B^{(k)}||_{F}^2 +||\bV_B^{(k)}||_{F}^2) +\\
    &\sum^{L}_{l=1} \lambda_S^{(l)}\min_{\bU_S^{(l)},\bV_S^{(l)}:\bS_{\cdot}^{(l)}=\bU_S^{(l)}\bV_S^{(l)T}}( ||\bU_S^{(l)}||_{F}^2 + ||\bV_S^{(l)}||_{F}^2)  \}\\
= \min_{\substack{\{\bU_B^{(k)},\bV_B^{(k)}\}^K_{k=1}, \{\bU_S^{(l)},\bV_S^{(l)}\}^L_{l=1}}}  &\frac{1}{2}\{ ||\bX_{\cdot} - \sum^{K}_{k=1} \bU_B^{(k)}\bV_B^{(k)T} \bY_{\cdot}^{(k)} - \sum^{L}_{l=1}  \bU_S^{(l)}\bV_S^{(l)T}||_F^2 + \nonumber\\
    &\sum^{K}_{k=1} \lambda_B^{(k)} (||\bU_B^{(k)}||_{F}^2 +||\bV_B^{(k)}||_{F}^2) +\sum^{L}_{l=1} \lambda_S^{(l)}( ||\bU_S^{(l)}||_{F}^2 + ||\bV_S^{(l)}||_{F}^2).  \}
\end{align*}

\end{proof}

\subsection{Proof of Proposition 1}

\begin{proof}
Assume a violation of condition 1, wherein $\lambda_B^{(k)} \geq \frac{1}{c_y} \sum_{i\in \mathcal{I}_k} \lambda_B^{(i)}$. Let $\widehat{\bB}_{k}\bY_{\cdot}^{(k)} = \sum_{i\in \mathcal{I}_k} \widehat{\bB}_{i}'\bY_{\cdot}^{(i)}$, where  $\widehat{\bB}_{i}'\bY_{\cdot}^{(i)}$ contains the blocks of $\widehat{\bB}_{k}\bY_{\cdot}^{(k)}$ corresponding to $\bC_Y[\cdot,i]$ and $\bold{0}$ otherwise. The choice of $\widehat{\bB}_{i}'$ is unique that $\widehat{\bB}_{i}' = \frac{1}{c_y}\widehat{\bB}_{k}$. For all $i\in \mathcal{I}_k$, we have $    ||\widehat{\bB}_{i}'\bY_{\cdot}^{(i)}||_{*}
    = ||\frac{1}{c_y}\widehat{\bB}_{k}\bY_{\cdot}^{(i)}||_{*} \leq||\frac{1}{c_y}\widehat{\bB}_{k}\bY_{\cdot}^{(k)}||_{*} \leq ||\widehat{\bB}_{k}\bY_{\cdot}^{(k)}||_{*}$
since $c_y\geq1$. 
Consider a minimizer $\{ \widetilde{\bB}_k \}_{k=1}^K, \{ \widetilde{\bS}_{\cdot}^{(l)} \}_{l=1}^L $, where $\{ \widetilde{\bS}_{\cdot}^{(l)} \}_{l=1}^L=\{ \widehat{\bS}_{\cdot}^{(l)} \}_{l=1}^L$, $\widetilde{\bB}_{k} = \bold{0}$, $\widetilde{\bB}_{i} = \widehat{\bB}_{i} + \widehat{\bB}_{i}', \forall i \in \mathcal{I}_k$,  and all other $\bB$ estimates are equal. Then, by the triangle inequality, 
\begin{align*}
       f(\{ \widehat{\bB}_k \}_{k=1}^K, \{ \widehat{\bS}_{\cdot}^{(l)} \}_{l=1}^L) - f(\{ \widetilde{\bB}_k \}_{k=1}^K, \{ \widetilde{\bS}_{\cdot}^{(l)} \}_{l=1}^L) 
     &  = \lambda_B^{(k)} ||\widehat{\bB}_k||_{*} +\sum_{i\in \mathcal{I}_k} \lambda_B^{(i)} ||\widehat{\bB}_i||_{*} - \sum_{i\in \mathcal{I}_k} \lambda_B^{(i)} ||\widehat{\bB}_i + \widehat{\bB}_i'||_{*}\\
    & \geq \lambda_B^{(k)} ||\widehat{\bB}_k||_{*} +\sum_{i\in \mathcal{I}_k} \lambda_B^{(i)} ||\widehat{\bB}_i||_{*} - \sum_{i\in \mathcal{I}_k} \lambda_B^{(i)} (||\widehat{\bB}_i||_{*} + ||\widehat{\bB}_i'||_{*})\\
    &  = \lambda_B^{(k)} ||\widehat{\bB}_k||_{*} - \sum_{i\in \mathcal{I}_k} \lambda_B^{(i)}  ||\widehat{\bB}_i'||_{*}\\
    & = (\lambda_B^{(k)} - \frac{1}{c_y}\sum_{i\in \mathcal{I}_k} \lambda_B^{(i)} ) ||\widehat{\bB}_k||_{*}\\
    & \geq 0
\end{align*}

Now assume a violation of condition 2, wherein $\lambda_B^{(k)} \geq \frac{1}{c_{sy}}\sum_{i\in \mathcal{I}_k} \lambda_S^{(i)} ||\bY_{\cdot}^{(k)}||_{*}$. Let $\widehat{\bB}_{k}\bY_{\cdot}^{(k)} = \sum_{i\in \mathcal{I}_k} \widehat{\bS}_{\cdot}^{(i)'}$, where  $\widehat{\bS}_{\cdot}^{(i)'}$ contains the blocks of $\widehat{\bB}_{k}\bY_{\cdot}^{(k)}$ corresponding to $\bC_Y[\cdot,i]$ and $\bold{0}$ otherwise. The choice of $\widehat{\bS}_{\cdot}^{(i)'}$ is unique that $\widehat{\bS}_{\cdot}^{(i)'}= \frac{1}{c_{sy}}\widehat{\bB}_{k}\bY_{S\cdot}^{(i)}$, where $\bY_{S\cdot}^{(i)} = [\bY_{S1}^{(i)},\bY_{S2}^{(i)},...,\bY_{S\bJ}^{(i)}]$ with  $\bY_{Sj}^{(i)} = $$\begin{dcases} 
      0_{q\times n_j} & \text{if }\bC_S[j,i] = 0 
      \\
  \bY_j & \text{if }  \bC_S[j,i] = 1
  \end{dcases}$ for all $i\in \mathcal{I}_k$. Since $\bY_{S\cdot}^{(i)}$ is gained by setting some blocks of $\bY_{\cdot}^{(k)}$ to be zero, $||\widehat{\bS}_{\cdot}^{(i)'}||_{*} = ||\frac{1}{c_{sy}}\widehat{\bB}_{k}\bY_{S\cdot}^{(i)}||_{*} \leq ||\frac{1}{c_{sy}}\widehat{\bB}_{k}\bY_{\cdot}^{(k)}||_{*}$. Consider a minimizer $\{ \widetilde{\bB}_k \}_{k=1}^K, \{ \widetilde{\bS}_{\cdot}^{(l)} \}_{l=1}^L $, where  $\widetilde{\bB}_{k} = \bold{0}$, $\widetilde{\bS}_{i} = \widehat{\bS}_{\cdot}^{(i)} + \widehat{\bS}_{\cdot}^{(i)'}, \forall i \in \mathcal{I}_k$,  and all other $\bB,\bS$ estimates are equal. Then, by the triangle inequality,
\begin{align*}
       f(\{ \widehat{\bB}_k \}_{k=1}^K, \{ \widehat{\bS}_{\cdot}^{(l)} \}_{l=1}^L) - f(\{ \widetilde{\bB}_k \}_{k=1}^K, \{ \widetilde{\bS}_{\cdot}^{(l)} \}_{l=1}^L) 
     &  = \lambda_B^{(k)} ||\widehat{\bB}_k||_{*} +\sum_{i\in \mathcal{I}_k} \lambda_S^{(i)} ||\widehat{\bS}_{\cdot}^{(i)}||_{*} - \sum_{i\in \mathcal{I}_k} \lambda_S^{(i)} ||\widehat{\bS}_{\cdot}^{(i)} + \widehat{\bS}_{\cdot}^{(i)'}||_{*}\\
    & \geq \lambda_B^{(k)} ||\widehat{\bB}_k||_{*} +\sum_{i\in \mathcal{I}_k} \lambda_S^{(i)} ||\widehat{\bS}_{\cdot}^{(i)}||_{*} - \sum_{i\in \mathcal{I}_k} \lambda_S^{(i)} (||\widehat{\bS}_{\cdot}^{(i)}||_{*} + ||\widehat{\bS}_{\cdot}^{(i)'}||_{*})\\
    &  = \lambda_B^{(k)} ||\widehat{\bB}_k||_{*} - \sum_{i\in \mathcal{I}_k} \lambda_S^{(i)}  ||\widehat{\bS}_{\cdot}^{(i)'}||_{*}\\
    &  \geq \lambda_B^{(k)} ||\widehat{\bB}_k||_{*} - \sum_{i\in \mathcal{I}_k} \lambda_S^{(i)}  ||\frac{1}{c_{sy}}\widehat{\bB}_{k}\bY_{\cdot}^{(k)}||_{*}\\
    &  \geq \lambda_B^{(k)} ||\widehat{\bB}_k||_{*} - \sum_{i\in \mathcal{I}_k} \lambda_S^{(i)}  ||\frac{1}{c_{sy}}\widehat{\bB}_{k}||_{*} ||\bY_{\cdot}^{(k)}||_{*}\\
    & = (\lambda_B^{(k)} - \frac{1}{c_{sy}}\sum_{i\in \mathcal{I}_k} \lambda_S^{(i)} ||\bY_{\cdot}^{(k)}||_{*}) ||\widehat{\bB}_k||_{*}\\
    & \geq 0
\end{align*}

The proof for condition 3 and 4 is similar to arguments made in \citep{Lock2022BADIFACplus}.
\end{proof}

\subsection{Proof of Proposition 2}

\begin{proof}
Since the rows of $\bY$ are linear independent, the projection onto the space spanned by rows of $\bY, \mathcal{R}(\bY),$ is $\bY^T(\bY\bY^T)^{-1}\bY = \bY^T\bY$. By decomposing $\bX$ onto $\mathcal{R}(\bY)$ and $\mathcal{R}(\bI-\bY)$, we have
\begin{align*}
  ||\bX-\bA\bY||_F^2 & = ||\bX(\bI- \bY^T\bY)+ \bX \bY^T\bY-\bA\bY||_F^2\\
  & = ||\bX(\bI- \bY^T\bY)||_F^2 + || \bX \bY^T\bY-\bA\bY||_F^2 +  tr[2 (\bI- \bY^T\bY) \bX^T (\bX \bY^T-\bA)\bY]\\
  & = ||\bX(\bI- \bY^T\bY)||_F^2 + || \bX \bY^T\bY-\bA\bY||_F^2 +  tr[2 \bY(\bI- \bY^T\bY) \bX^T (\bX \bY^T-\bA)]\\
  & = ||\bX(\bI- \bY^T\bY)||_F^2 + || \bX \bY^T\bY-\bA\bY||_F^2 
\end{align*}
The second equation comes from the cyclic property of trace. Construct an orthogonal matrix $\bQ = [\bY,\bY^*]$ where the rows of $\bY^*$ give an orthonormal basis for $\mathcal{R}(\bI-\bY)$, i.e. $\bY\bY^{*T}=\bold{0}$. Therefore,
\begin{align*}
|| \bX \bY^T\bY-\bA\bY||_F^2 & = tr[(\bX \bY^T\bY-\bA\bY)^T(\bX \bY^T\bY-\bA\bY)]\\
&=tr[(\bX \bY^T\bY-\bA\bY)^T(\bX \bY^T\bY-\bA\bY)\bQ^T\bQ]\\
&=|| \bX \bY^T\bY\bQ^T-\bA\bY\bQ^T||_F^2\\
&=|| \bX \bY^T\bY\begin{bmatrix}
\bY^T\\
\bY^{*T}
\end{bmatrix}-\bA\bY\begin{bmatrix}
\bY^T\\
\bY^{*T}
\end{bmatrix}||_F^2\\
&=|| \bX \bY^T\begin{bmatrix}
\bI\\
\bold{0}
\end{bmatrix}-\bA\begin{bmatrix}
\bI\\
\bold{0}
\end{bmatrix}||_F^2\\
&=||\bX\bY^T-\bA||_F^2
\end{align*}
Combining these results, we rewrite
\begin{align*}
 \min_{\bA} \{\frac{1}{2} ||\bX-\bA\bY||_F^2 + \lambda||\bA||_{*}\}
  &=\min_{\bA} \{\frac{1}{2}||\bX(\bI- \bY^T\bY)||_F^2+ \frac{1}{2} ||\bX\bY^T-\bA||_F^2 + \lambda||\bA||_{*}\}\\
    &= \min_{\bA} \{ \frac{1}{2} ||\bX\bY^T-\bA||_F^2 + \lambda||\bA||_{*}\} +\frac{1}{2}||\bX(\bI- \bY^T\bY)||_F^2
\end{align*}
Apply Lemma 1 and we get the first desired result. The second desired result follows immediately due to $||\bA||_* = ||\bA\bY||_*$ for $\bY\bY^T=\bI$. Let SVD of $\bA$ be $\bU_A\bD_A\bV_A^T$ and $\bV_A' = \bY^T\bV_A$. Since $\bV_A'^T\bV_A' = \bV_A^T\bY\bY^T\bV_A = \bV_A^T\bV_A = \bI$, $\bU_A\bD_A\bV_A'^T$ is the SVD of $\bA\bY$. $\bA$ and $\bA\bY$ share the same singular values so that their nuclear norms are the same.
\end{proof}

\subsection{Proof of Proposition 4}

\begin{proof}

We start with the special case when $\bY$ is an identity matrix with $q=n$ and $\bX=\bB+ \frac{1}{\sqrt{n}}\bE$. It follows directly from  \citep{Shabalin2013res} that \begin{align*}
\sigma_j(\bX)\xrightarrow{P}
    \begin{dcases} \sqrt{1+\sigma_j^2(\bB)+c+\frac{c}{\sigma_j^2(\bB)}},
   & \text{if }  \sigma_j(\bB)>\sqrt[4]{c} \\ 
  1+\sqrt{c}, & \text{if }  \sigma_j(\bB)\leq\sqrt[4]{c} 
  \end{dcases}
\end{align*}. Note $s(\sigma_j(\bB)) = \sqrt{1+\sigma_j^2(\bB)+c+\frac{c}{\sigma_j^2(\bB)}}$ is a monotonic increasing function when $\sigma_j(\bB)>\sqrt[4]{c}$. Therefore, $min \{s(\sigma_j(\bB))\} > s(\sqrt[4]{c}) = 1 + \sqrt{c}$.

Now consider the more general case that $\bX_{m\times n}=\bB_{m\times q}\bY_{q\times n}+\bE_{m\times n}$. Due to semi-orthogonality of $\bY$, $\bX\bY^T=\bB+\bE\bY^T$. Since $\bE$ is of matrix normal distribution  $\mathcal{MN}_{m,n}(\bold{0}_{m\times n},\bI_{m\times m},\bI_{n\times n})$ and $\bY^T_{n\times q}$ is a linear transformation of full rank $q\leq n$, we have 
\begin{align*}
    \bE \bY^T \sim \mathcal{MN}_{m,q}(\bold{0}\bY^T,\bI_{m\times m},\bY \bI \bY^T) = \mathcal{MN}_{m,q}(\bold{0}_{m\times q},\bI_{m\times m},\bI_{q\times q}).
\end{align*}
Recognize that entries of $\bE\bY^T$ are still independent normal. Denote $\Bar{\bE} = \bE\bY^T$ and $\Bar{\bX} = \bX\bY^T$. The original question becomes $\Bar{\bX}_{m\times q} = \bB_{m\times q} + \Bar{\bE}_{m\times q}$. Applying the result of the special case, 
\begin{align*}
\sigma_j(\bX\bY^T) = 
\sigma_j(\Bar{\bX})= \xrightarrow{P}
    \begin{dcases} \sqrt{1+\sigma_j^2(\bB)+c+\frac{c}{\sigma_j^2(\bB)}},
   & \text{if }  \sigma_j(\bB)>\sqrt[4]{c} \\ 
  1+\sqrt{c}, & \text{if }\sigma_j(\bB)\leq\sqrt[4]{c}.  \hspace{80 pt}  
  \end{dcases} 
\end{align*}  
Following the aforementioned reasoning, we have the conclusion that $\sigma_j(\bX\bY^T)$ will converge to a number larger than $1+\sqrt{c}$ as $\sigma_j(\bB)>\sqrt[4]{c}$.
\end{proof}

\subsection{Proposition 5 and its proof}

\begin{prop}\label{sameBYS}
For any semi-orthogonal matrix $\bY$ such that $\bY^T\bY = \bI$, if the optimization problems $\min_{\bB} \{\frac{1}{2} ||\bX-\bB\bY||_F^2 + \lambda||\bB||_{*}\}$ and $\min_{\bS} \{\frac{1}{2} ||\bX-\bS||_F^2 + \lambda||\bS||_{*}\}$ have their optimal solutions as $\widehat{\bB}$ and $\widehat{\bS}$ respectively, then $\widehat{\bS} = \widehat{\bB}\bY$.
\end{prop}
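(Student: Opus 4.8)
The plan is to show that right-multiplication by $\bY$ carries the solution of the ``regression'' problem $\min_{\bB}\{\tfrac12||\bX-\bB\bY||_F^2+\lambda||\bB||_*\}$ onto the solution of the ``unsupervised'' problem $\min_{\bS}\{\tfrac12||\bX-\bS||_F^2+\lambda||\bS||_*\}$, by sandwiching the two optimal values and then invoking the uniqueness of the soft-thresholding solution from Lemma~\ref{nuclearA}. First I would extract the two algebraic facts that make the $\bY^T\bY=\bI$ regime degenerate. Since the $q\times n$ matrix $\bY$ has orthonormal columns, $\bY\bY^T$ is an orthogonal projection, so $\bY\bY^T\preceq\bI$. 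Consequently: (i) for every $\bB$, $(\bB\bY)(\bB\bY)^T=\bB(\bY\bY^T)\bB^T\preceq\bB\bB^T$, so $\sigma_i(\bB\bY)\le\sigma_i(\bB)$ for all $i$ (monotonicity of eigenvalues under the Loewner order) and hence $||\bB\bY||_*\le||\bB||_*$; and (ii) for every $\bS$, $(\bS\bY^T)(\bS\bY^T)^T=\bS(\bY^T\bY)\bS^T=\bS\bS^T$, so $\bS\bY^T$ and $\bS$ have the same singular values, giving $||\bS\bY^T||_*=||\bS||_*$, and moreover $(\bS\bY^T)\bY=\bS(\bY^T\bY)=\bS$.

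Next I would compare optimal values. Write $g(\bS)=\tfrac12||\bX-\bS||_F^2+\lambda||\bS||_*$. For the lower bound: for any $\bB$, fact (i) gives $\tfrac12||\bX-\bB\bY||_F^2+\lambda||\bB||_*\ge\tfrac12||\bX-\bB\bY||_F^2+\lambda||\bB\bY||_*\ge\min_{\bS}g(\bS)$, so the regression optimum is at least $\min_\bS g(\bS)$. For the matching upper bound: feeding $\bB=\widehat{\bS}\bY^T$ into the regression objective and using fact (ii) yields $\tfrac12||\bX-\widehat{\bS}||_F^2+\lambda||\widehat{\bS}||_*=\min_\bS g(\bS)$, so the regression optimum is at most $\min_\bS g(\bS)$. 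Hence the two optimal values coincide.

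Finally I would identify the minimizer. Let $\widehat{\bB}$ be any solution of the regression problem. Then $\min_\bS g(\bS)=\tfrac12||\bX-\widehat{\bB}\bY||_F^2+\lambda||\widehat{\bB}||_*\ge\tfrac12||\bX-\widehat{\bB}\bY||_F^2+\lambda||\widehat{\bB}\bY||_*\ge\min_\bS g(\bS)$ by fact (i), so every inequality is an equality; in particular $\widehat{\bB}\bY$ attains $\min_\bS g(\bS)$, i.e.\ it is a minimizer of $g$. But $g$ is strictly convex (a strictly convex quadratic plus a convex term), so its minimizer is unique, and by Lemma~\ref{nuclearA} it equals $\widehat{\bS}$. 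Therefore $\widehat{\bS}=\widehat{\bB}\bY$.

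The only genuinely load-bearing step is the pair of observations (i)--(ii): in this ``tall'' orthonormal-columns regime, right-multiplication by $\bY$ is a nuclear-norm contraction while right-multiplication by $\bY^T$ is a nuclear-norm isometry that also inverts it. Everything after that is a two-sided sandwich plus the uniqueness already supplied by Lemma~\ref{nuclearA}; one should just sanity-check the dimensions, namely that $\widehat{\bS}\bY^T$ (which is $m\times q$) is a legitimate candidate for $\bB$, but this is immediate.
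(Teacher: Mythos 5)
Your proof is correct, and it takes a genuinely different route from the paper's. The paper argues by explicit computation: it asserts the identity $||\bX-\bB\bY||_F=||\bX\bY^T-\bB||_F$, reduces the regression problem to the soft-thresholding problem $\min_{\bB}\{\tfrac12||\bX\bY^T-\bB||_F^2+\lambda||\bB||_*\}$, writes out the SVD of $\bX\bY^T$ as $\bU\bD(\bY\bV)^T$ from the SVD $\bU\bD\bV^T$ of $\bX$, and reads off both minimizers from Lemma~1. Your argument instead never computes either solution: it sandwiches the two optimal values using the contraction $||\bB\bY||_*\le||\bB||_*$ and the isometry $||\bS\bY^T||_*=||\bS||_*$ with $(\bS\bY^T)\bY=\bS$, and then identifies $\widehat{\bB}\bY$ as a minimizer of the strictly convex unsupervised objective, so uniqueness finishes the job. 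What your route buys is worth noting: the paper's opening identity is, as literally stated, false when $n<q$ (take $\bY=(1,0)^T$, $\bX=0$, $\bB=(0,1)$; then $||\bX-\bB\bY||_F=0$ but $||\bX\bY^T-\bB||_F=1$), since $||\bB\bY||_F=||\bB||_F$ requires $\bY\bY^T=\bI$, which is the opposite orthogonality to the one assumed here; the identity only holds on the subspace $\bB=\bB\bY\bY^T$, and a correct version of the paper's argument must first restrict to that subspace (which is legitimate because projecting $\bB$ onto it leaves the fit unchanged and can only decrease the nuclear norm). Your sandwich avoids this issue entirely, and as a bonus shows that the two problems have equal optimal values and that \emph{every} regression minimizer $\widehat{\bB}$ satisfies $\widehat{\bB}\bY=\widehat{\bS}$. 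The only cosmetic point is that the final appeal to Lemma~1 is unnecessary: strict convexity of the unsupervised objective already forces $\widehat{\bB}\bY=\widehat{\bS}$.
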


\begin{proof}
 Since orthogonal transformation preserves Frobenius norm, we have $||\bX-\bB\bY||_F = ||\bX\bY^T-\bB||_F$. Then, 
\begin{align*}
    \min_{\bB} \{\frac{1}{2} ||\bX-\bB\bY||_F^2 + \lambda||\bB||_{*}\} = \min_{\bB} \{\frac{1}{2} ||\bX\bY^T-\bB||_F^2 + \lambda||\bB||_{*}\}.
\end{align*}
Let SVD of $\bX$ to be $\bU\bD\bV^T$. Since $(\bY\bV)^T(\bY\bV) = \bV^T\bY^T\bY\bV = \bV^T\bV = \bI$, we have $\bU\bD(\bY\bV)^T = \bU\bD\bV^T\bY^T = \bX\bY^T$ serving as the SVD of $\bX\bY^T$.
Applying Lemma 1, the solution for $\min_{\bB} \{\frac{1}{2} ||\bX\bY^T-\bB||_F^2 + \lambda||\bB||_{*}\}$ is $\widehat{\bB} = \bU\widetilde{\bD}(\bY\bV)^T$ and the solution for $\min_{\bS} \{\frac{1}{2} ||\bX-\bS||_F^2 + \lambda||\bS||_{*}\}$ is $\widehat{\bS} = \bU\widetilde{\bD}\bV^T$. Therefore, $\widehat{\bS} = \widehat{\bB}\bY$.\\
\end{proof}

If $\bY_{\cdot}^{(k)}$ is semi-orthogonal with $\bY_{\cdot}^{(k)T}\bY_{\cdot}^{(k)} = \bI_n (q\geq n)$ and contains information from exactly the same cohorts as $\bS_{\cdot}^{(l)}$ (i.e., $\bC_Y[\cdot,k] = \bC_S[\cdot,l]$), the estimation process cannot differentiate $\widehat{\bB}_k\bY_{\cdot}^{(k)}$ from $\widehat{\bS}_{\cdot}^{(l)}$. Even without overlapping modules, the newly proposed model reverts to the unsupervised model~\ref{ObjUV} described in Section~\ref{sec:obejective}, which only comprises $\{ \widehat{\bS}{\cdot}^{(l)} \}_{l=1}^L$. Thus, the loss with semi-orthogonal regressors $\bY: q\times n$ is only valid if $q<n$. This finding aligns with our real-world problem of interest: in reality, we have only a few somatic mutations, the number of which is less than the number of patients in the study.

\section{Scaling and orthogonalization}
\label{app:scaling}

In reality the original $\bY$ is not orthogonal. In practice we scale $\bX$ and orthogonalize $\bY$ prior to optimization. We first center each row of $\bX_{\cdot}$ to have mean 0. In order to satisfy the standard normal noise requirement, we estimate the error variance for $\bX_{\cdot}$ by using the median absolute deviation estimator from \citep{gavish2017optimal}. The estimated variance of $\bX_{\cdot}$ is denoted as $\hat{\sigma}^2$. Then, we use $\bX_{\cdot}/\hat{\sigma}$ as the final data matrix for optimization, which has residual variance approximately 1. 
We further orthogonalize the columns of $\bY$ via SVD prior to optimization, and transform the solution back to the original covariate space afterward.
 Here, we describe two scaling approaches for $\bY$: one standardizing the original covariates with out orthogonalization, and another rotating the covariate space so that it is orthogonal. In the next section, we show simulation results that illustrate the difference between these two approaches. 

Instead of estimating $\bB$ based on original $\bY_{\cdot}^{(k)}$, we first center each covariate in $\bY_{\cdot}^{(k)}$ and scale each covariate to have variance 1 and then do the optimization. For each $k=1,...,K$, the detailed steps are:
\begin{enumerate}
    \item Center each covariate in $\bY_{\cdot}^{(k)}$ and calculate the square root of row sums of squared entry of $\bY_{\cdot}^{(k)}$, denoted as $t_1^{(k)},t_2^{(k)},...,t_q^{(k)}$.
    \item Construct a diagonal matrix $\bT_k = diag(t_1^{(k)},...,t_q^{(k)})$. 
    \item Use $\bY_{\cdot scaled}^{(k)} = \bT_k^{-1}\bY_{\cdot}^{(k)}$ in optimization.
    \item Denote the estimation with respect to $\bY_{\cdot scaled}^{(k)}$ as $\widehat{\bB}_{k,scaled}$. Then, our final estimation for $\bB_k$ on its original scale is $\widehat{\bB}_k = \widehat{\bB}_{k,scaled} \bT_k$.
\end{enumerate}
 In order to further reduce collinearity among covariates, after centerization we may instead choose to apply orthogonalization to $\bY_{\cdot}^{(k)}$ rather than standardization. For each $k=1,...,K$, the detailed steps are:
 \begin{enumerate}
\item Obtain the SVD of $\bY_{\cdot}^{(k)}$, i.e. $\bY_{\cdot}^{(k)} = \bU_k \bD_k \bV_k^T$.
\item Use $\bY_{\cdot orth}^{(k)} = \bV_k^T$ in optimization.
\item Denote the estimation with respect to $\bY_{\cdot orth}^{(k)}$ as $\widehat{\bB}_{k,orth}$. Then, our final estimation for $\bB_k$ on its original scale is $\widehat{\bB}_k = \widehat{\bB}_{k,orth} \bU_k \bD_k$.
 \end{enumerate}
There are other ways to orthogonalize, such as Gram-Schmidt method. No matter standardization or orthogonalization, we record the transforming matrix from the original $\bY_{\cdot}^{(k)}$ to new $\bY_{\cdot scaled/orth}^{(k)}$ for correction of estimated $\bB_k$ at the final stage.

\section{More details on the simulations}

\subsection{Complete data generation}
\label{app:data_gen}
Here we describe the complete process to generate data in Section~\ref{missingsims}.

\begin{enumerate}
    \item 
    Every entry in each $\bY_j, j= 1,...,J=30$ is drawn independently from a standard normal distribution. By default, the variance of each feature is 1. Construct one global shared $\bY_{\cdot}^{(1)} = [\bY_1,...,\bY_{30}]$ and 30 individual $\bY_{\cdot}^{(k)} = [\bold{0},...,\bY_{k-1},...,\bold{0}], k=2,...,31$ (only the $k-1$th submatrix is non-zero). 
    \item For the $K=31$ modules of $\bY^{(k)}_{\cdot}$, generate $\bB_{k} = \bU_{B}^{(k)} \bV_{B}^{(k)T} /sd(\bU_{B}^{(k)} \bV_{B}^{(k)T} \bY_{\cdot}^{(k)}) *\sqrt{n_k/n}$, where $n_k$ is the number of samples in module $k$. Each entry of $\bU_{B}^{(k)}: p\times r, \bV_{B}^{(k)}: q \times r$ comes from standard Normal distribution.
    \item  For a number of $L =31$ modules of $\bS^{(l)}_{\cdot}$ involved, draw one global score matrix $\bV_{S}^{(1)}: n\times r$ and 30 individual score matrices $\bV_{S}^{(l)} = [\bold{0},...,\bV_{l-1},...,\bold{0}], l=2,...,31$, where each entry of $\bV_1,...,\bV_{30}$ comes from standard Normal. Generate $\bS_{\cdot}^{(l)} = \bU_{S}^{(l)} \bV_{S}^{(l)T} /sd(\bU_{S}^{(l)} \bV_{S}^{(l)T} ) *\sqrt{n_l/n}$, where each entry of $\bU_{S}^{(l)}: p\times r$ comes from standard Normal distribution and $n_l$ is the number of samples in module $l$.
    \item Draw each entry of $\bE_{\cdot}$ from a standard normal distribution.
    \item Generate $\bX_{\cdot} = a * \bB_1 \bY_{\cdot}^{(1)} + b* \bS_{\cdot}^{(1)} + c* \sum^{31}_{k=2} \bB_k \bY_{\cdot}^{(k)}  + d*\sum^{31}_{l=2}  \bS_{\cdot}^{(l)} + \bE_{\cdot} $. The letters $a,b,c,d$ are constant for signal size. For instance,  scenario ($a$), $ a = \sqrt{10}$ and the remaining equal 1. 
\end{enumerate}

\subsection{Computation time}
\label{app:computation}

Table~\ref{tab:computation_time} demonstrates the computation time for all methods in missing data imputation for the TCGA data. The computation time of non-missing optimization and missing data imputation are similar. The proposed maRRR method consumes the most time since it considers both covariate effects and auxiliary modules. Generally, the computation time is proportional to the number of modules. But it will vary because of different number of cohorts within one module. Algorithm 1 requests more computation time when the true rank in some module is large. In general, both algorithms output similar RSE so that the one takes less computation time is used in the simulation. The case of missing columns takes slightly less computation time than the other two cases. This results from that it is uninformative when missing certain subjects. It may lead to all-zero imputation and this explains why aRRR for 30 separate modules takes significantly less time than the other cases. 

\begin{table}[H]
\caption{Computation time (in seconds) for our proposed methods and all other comparison methods in missing data imputation (based on algorithm 1 and 2 respectively) for the TCGA real data application. Each other method is based on 30 epochs. NA represents that missing data imputation based on Algorithm 2 does not work for ``NNreg" and ``NNapprox". ``50 modules" means that the method is based on 50 detected modules. ``1 all-shared + 30 separate" means that the method is based on one all-shared module and 30 separate modules (in total 31 modules). For method name representations, please refer to Section~\ref{missingsims} in the main context.}
\begin{tabular}{l|llll|llll}
\hline
                                                                                  & \multicolumn{4}{l|}{Algorithm 1}                                                                                                                                                         & \multicolumn{4}{l}{Algorithm 2}                                                                                                                                                          \\ \hline
Method                                                                            & \begin{tabular}[c]{@{}l@{}}missing\\ entries\end{tabular} & \begin{tabular}[c]{@{}l@{}}missing\\ columns\end{tabular} & \begin{tabular}[c]{@{}l@{}}missing\\ rows\end{tabular} & average & \begin{tabular}[c]{@{}l@{}}missing\\ entries\end{tabular} & \begin{tabular}[c]{@{}l@{}}missing\\ columns\end{tabular} & \begin{tabular}[c]{@{}l@{}}missing\\ rows\end{tabular} & average \\ \hline
\begin{tabular}[c]{@{}l@{}}maRRR\\ 50 modules\end{tabular}                        & 6251.0                                                    & 5345.2                                                    & 6442.4                                                 & 6012.9  & 3725.3                                                    & 3264.7                                                    & 3754.8                                                 & 3581.6  \\ \cline{1-1}
\begin{tabular}[c]{@{}l@{}}BIDIFAC+,\\ 50 modules\end{tabular}                    & 5000.7                                                    & 4283.4                                                    & 5198.2                                                 & 4827.4  & 1140.7                                                    & 1058.2                                                    & 1161.3                                                 & 1120.1  \\ \cline{1-1}
\begin{tabular}[c]{@{}l@{}}mRRR,\\ 50 modules\end{tabular}                        & 1566.7                                                    & 1327.9                                                    & 1576.9                                                 & 1490.5  & 2905.9                                                    & 2519.9                                                    & 2873.3                                                 & 2766.4  \\ \cline{1-1}
\begin{tabular}[c]{@{}l@{}}maRRR,\\ 1 all-shared + \\ 30 separate\end{tabular}    & 1925.6                                                    & 1623.3                                                    & 1962.9                                                 & 1837.3  & 2276.5                                                    & 2085.1                                                    & 2303.6                                                 & 2221.7  \\ \cline{1-1}
\begin{tabular}[c]{@{}l@{}}BIDIFAC+,\\ 1 all-shared + \\ 30 separate\end{tabular} & 1109.9                                                    & 967.1                                                     & 1124.9                                                 & 1067.3  & 705.2                                                     & 684.8                                                     & 714.9                                                  & 701.6   \\ \cline{1-1}
\begin{tabular}[c]{@{}l@{}}mRRR,\\ 1 all-shared + \\ 30 separate\end{tabular}     & 938.2                                                     & 823.7                                                     & 939.5                                                  & 900.4   & 1711.2                                                    & 1556.3                                                    & 1729.0                                                 & 1665.5  \\ \cline{1-1}
\begin{tabular}[c]{@{}l@{}}aRRR, one\\ all-shared\end{tabular}                    & 617.9                                                     & 524.8                                                     & 637.8                                                  & 593.5   & 75.1                                                      & 67.6                                                      & 74.7                                                   & 72.5    \\ \cline{1-1}
\begin{tabular}[c]{@{}l@{}}aRRR, \\ 30 separate\end{tabular}                      & 1246.3                                                    & 68.5                                                      & 1249.0                                                 & 854.6   & 2196.8                                                    & 123.4                                                     & 2195.6                                                 & 1505.3  \\ \cline{1-1}
\begin{tabular}[c]{@{}l@{}}NNreg, one\\ all-shared\end{tabular}                   & 2.2                                                       & 2.2                                                       & 5.5                                                    & 3.3     & NA                                                        & NA                                                        & NA                                                     & NA      \\ \cline{1-1}
\begin{tabular}[c]{@{}l@{}}NNapprox,\\ one all-shared\end{tabular}                & 116.7                                                     & 49.6                                                      & 585.0                                                  & 250.5   & NA                                                        & NA                                                        & NA                                                     & NA      \\ \cline{1-1}
\begin{tabular}[c]{@{}l@{}}NNreg,\\ 30 separate\end{tabular}                      & 4.5                                                       & 18.5                                                      & 21.5                                                   & 14.9    & NA                                                        & NA                                                        & NA                                                     & NA      \\ \cline{1-1}
\begin{tabular}[c]{@{}l@{}}NNapprox, \\ 30 separate\end{tabular}                  & 60.4                                                      & 22.6                                                      & 256.2                                                  & 113.1   & NA                                                        & NA                                                        & NA                                                     & NA      \\ \hline
\end{tabular}
\label{tab:computation_time}
\end{table}

\subsection{Additional simulation to assess aRRR and maRRR}
\label{app:additional_sims}

Here we discuss an additional simulation study to assess aspects of aRRR and maRRR, including the effects of orthogonalizing Y and the recovery of the underlying ranks of the true structure. The simulation covers scenarios in which the original explanatory data matrices $\bY_j, j= 1,...,J$ are orthogonal or not. The process, used to generate the complete data 
for this section, is as follows:
\begin{enumerate}
    \item 
    Every entry in each $\bY_j, j= 1,...,J$ is drawn independently from a standard normal distribution. 
    By default, the variance of each feature is 1. Denote the standard deviation for each $\bY_j, j= 1,...,J$ as $sd(\bY_j), j= 1,...,J$.
    \item If we aim to orthogonalize some $\bY_j, j= 1,...,J$, denote the transpose of the right orthogonal matrix from singular value decomposition of $\bY_j$ as $\bV_{Y_j}$. Calculate the orthogonal version of $\bY_j, j= 1,...,J$ by $\bV_{Y_j}/sd(\bY_j), j= 1,...,J$ in order to keep the original scale of variation.
    \item For a number of $K$ modules of covariate effects involved, construct $\bY^{(k)}_{\cdot}, k=1,...,K$. Generate $\bB_{k} = \bU_{B}^{(k)} \bV_{B}^{(k)T}$, where each entry of $\bU_{B}^{(k)}: q\times r, \bV_{B}^{(k)}: n_k \times r$ comes from standard Normal distribution.
    \item For a number of $L$ modules of $\bS^{(l)}_{\cdot}$ involved, generate $\bS_{\cdot}^{(l)} = \bU_{S}^{(l)} \bV_{S}^{(l)T}$, where each entry of $\bU_{S}^{(l)}: p\times r$ and each non-zero entry of $\bV_{S}^{(l)}: q\times r$ comes from standard Normal distribution.
    \item Draw each entry of $\bE_{\cdot}$ from a standard normal distribution.
    \item Generate $\bX_{\cdot} = \sum^{K}_{k=1} \bB_k \bY_{\cdot}^{(k)} + \sum^{L}_{l=1}  \bS_{\cdot}^{(l)} + \bE_{\cdot} $.
\end{enumerate}

Besides mean squared error (mse) as a metric to assess accuracy, we want to understand how low-rank structures for $\bB, \bS$ are uncovered is estimated under different orthogonality.  Therefore, we define the "rank sum ratio" as follows:
\begin{align*}
   \frac{\sum_{i=1}^{r_B} \lambda_i (\widehat{\bB})  }{\sum_{i=r_B + 1}^{r_{B,upper}} \lambda_i (\widehat{\bB}) }, \frac{\sum_{i=1}^{r_S} \lambda_i (\widehat{\bS})  }{\sum_{i=r_S + 1}^{r_{S,upper}} \lambda_i (\widehat{\bS}) }
\end{align*}
where $r_B$ is the true rank of $\bB$ and $r_{B,upper}$ is the upper bound of $\bB$ specified in estimation, both similarly defined for $\bS$. The smaller the rank sum ratio is, the lower the rank structure achieves. 

Table~\ref{tab:arrr_orth} and \ref{tab:marrr_orth} shows the MSE and rank sum ratio for aRRR and maRRR simulations respectively. We consider one cohort for aRRR and two cohorts for maRRR. In general, orthogonaliztion before optimization and in data generation achieve similar results. Compared with only standardization of $\bY$, orthogonaliztion of $\bY$ leads to less MSE and more accurate rank estimations. After orthogonaliztion, the proposed methods still overestimate the true rank of covariate-related signal, but it is not severe as the rank sum ratio is below 0.01. Therefore, orthogoalization of $\bY$ is helpful for optimization. 

\begin{table}[H]
\caption{Mean squared error(MSE) for aRRR under different scenarios of orthogonality, true rank and signal size. Row names: ``r\_y" refers to true rank of $\bY$ in the data generation process. ``sd\_YB" and ``sd\_S" refer to the standard deviation for matrix $\bB \bY$ and $\bS$ in generation respectively. ``epochs" refers to the number of iterations to converge. ``ratio\_B" refers to the rank sum ratio defined before. ``est\_rank\_B" counts the number of singular values that large than 0.1. Column names: ``no orth" means that the original $\bY$ is not orthogonal and we only standardize it before optimization. ``orth\_opt" means that we only orthogonalize original ``$\bY$" before optimization. ``orth\_gen" means that the original $\bY$ is orthogonal. }
\begin{tabular}{lllllllll}
\hline
               & r\_y & sd\_YB & sd\_S & epochs & mse\_B & mse\_S & est\_rank\_B & ratio\_B \\ \hline
no orth        & 1    & 5      & 0.5   & 65.82  & 0.001  & 0.607  & 1.3          & 0.002    \\
no orth        & 1    & 1      & 1     & 31.27  & 0.038  & 0.225  & 1.39         & 0.012    \\
no orth        & 1    & 0.5    & 5     & 44.82  & 0.172  & 0.012  & 1.48         & 0.037    \\
no orth        & 5    & 5      & 0.5   & 52.36  & 0.007  & 0.629  & 5.04         & 0        \\
no orth        & 5    & 1      & 1     & 34.38  & 0.141  & 0.238  & 4.95         & 0.001    \\
no orth        & 5    & 0.5    & 5     & 41.5   & 0.404  & 0.013  & 4.42         & 0        \\ \hline
orth\_opt      & 1    & 5      & 0.5   & 63.66  & 0.001  & 0.606  & 1.06         & 0        \\
orth\_opt      & 1    & 1      & 1     & 33.69  & 0.033  & 0.224  & 1.12         & 0.004    \\
orth\_opt      & 1    & 0.5    & 5     & 47.4   & 0.157  & 0.012  & 1.16         & 0.011    \\
orth\_opt      & 5    & 5      & 0.5   & 45.41  & 0.006  & 0.626  & 5            & 0        \\
orth\_opt      & 5    & 1      & 1     & 27.62  & 0.125  & 0.238  & 4.94         & 0        \\
orth\_opt      & 5    & 0.5    & 5     & 39.7   & 0.372  & 0.013  & 4.42         & 0        \\ \hline
orth\_gen      & 1    & 5      & 0.5   & 63.31  & 0.001  & 0.606  & 1.08         & 0.001    \\
orth\_gen      & 1    & 1      & 1     & 33.47  & 0.034  & 0.224  & 1.1          & 0.004    \\
orth\_gen      & 1    & 0.5    & 5     & 47.82  & 0.161  & 0.012  & 1.14         & 0.012    \\
orth\_gen      & 5    & 5      & 0.5   & 45.42  & 0.006  & 0.624  & 5            & 0        \\
orth\_gen      & 5    & 1      & 1     & 27.72  & 0.122  & 0.237  & 4.96         & 0        \\
orth\_gen      & 5    & 0.5    & 5     & 41.2   & 0.369  & 0.013  & 4.34         & 0.001    \\ \hline
\end{tabular}
\label{tab:arrr_orth}
\end{table}

\begin{table}[H]
\caption{Mean squared error(MSE) for maRRR under different scenarios of orthogonality, true rank and signal size. Row names: ``r\_y" refers to true rank of $\bY$ in the data generation process. ``sd\_YB" and ``sd\_S" refer to the standard deviation for matrix $\bB \bY$ and $\bS$ in generation respectively. ``epochs" refers to the number of iterations to converge.  ``ratio\_B" refers to the rank sum ratio defined before. ``est\_rank\_B" counts the number of singular values that large than 0.1. Column names: ``no orth" means that the original $\bY$ is not orthogonal and we only standardize it before optimization. ``orth\_opt" means that we only orthogonalize original ``$\bY$" before optimization. ``orth\_gen" means that the original $\bY$ is orthogonal. All results are the average of all $\bB$ or $\bS$.}
\begin{tabular}{lllllllll}
\hline
               & r\_y & sd\_YB & sd\_S & epochs & mse\_B & mse\_S & est\_rank\_B & ratio\_B \\ \hline
no orth        & 1    & 2      & 0.2   & 84.56  & 0.008  & 0.953  & 1.493        & 0.008    \\
no orth        & 1    & 1      & 1     & 77.72  & 0.037  & 0.175  & 1.873        & 0.024    \\
no orth        & 1    & 0.2    & 2     & 88.6   & 0.4    & 0.065  & 1.76         & 0.141    \\
no orth        & 5    & 2      & 0.2   & 123.45 & 0.115  & 0.962  & 6.617        & 0.033    \\
no orth        & 5    & 1      & 1     & 87.78  & 0.202  & 0.195  & 6.06         & 0.023    \\
no orth        & 5    & 0.2    & 2     & 91.41  & 0.735  & 0.065  & 3.273        & 0        \\ \hline
orth\_opt      & 1    & 2      & 0.2   & 82.34  & 0.007  & 0.952  & 1.393        & 0.007    \\
orth\_opt      & 1    & 1      & 1     & 76.96  & 0.034  & 0.174  & 1.66         & 0.019    \\
orth\_opt      & 1    & 0.2    & 2     & 90.04  & 0.366  & 0.065  & 1.543        & 0.089    \\
orth\_opt      & 5    & 2      & 0.2   & 119.61 & 0.114  & 0.955  & 6.727        & 0.036    \\
orth\_opt      & 5    & 1      & 1     & 86.66  & 0.194  & 0.195  & 6.18         & 0.026    \\
orth\_opt      & 5    & 0.2    & 2     & 93.63  & 0.702  & 0.065  & 3.313        & 0        \\ \hline
orth\_gen      & 1    & 2      & 0.2   & 82.64  & 0.007  & 0.952  & 1.333        & 0.006    \\
orth\_gen      & 1    & 1      & 1     & 78.4   & 0.034  & 0.175  & 1.64         & 0.018    \\
orth\_gen      & 1    & 0.2    & 2     & 89.62  & 0.367  & 0.065  & 1.52         & 0.09     \\
orth\_gen      & 5    & 2      & 0.2   & 116.33 & 0.109  & 0.955  & 6.743        & 0.035    \\
orth\_gen      & 5    & 1      & 1     & 86.79  & 0.186  & 0.195  & 6.2          & 0.027    \\
orth\_gen      & 5    & 0.2    & 2     & 93.07  & 0.692  & 0.065  & 3.437        & 0        \\ \hline
\end{tabular}
\label{tab:marrr_orth}
\end{table}

\section{More details on the real data analysis}

\subsection{Cancer type and mutation details}
\label{app:names}

We provide Table~\ref{tab:cancer_types} as the summary of all the 30 cancer types and Table~\ref{tab: mutation} as the summary of all the 50 somatic mutations considered in our real data analysis (Section~\ref{sec:data}).
\begin{table}[H]
\caption{The cancer study abbreviations, sample sizes and study names, sourcing from National Cancer Institute \url{https://gdc.cancer.gov/resources-tcga-users/tcga-code-tables/tcga-study-abbreviations}.}
\begin{tabular}{lll}
\hline
Abbreviation & Sample Size & Study Name                                                       \\\hline
ACC          & 77          & Adrenocortical carcinoma                                         \\
BLCA         & 129         & Bladder Urothelial Carcinoma                                     \\
BRCA         & 976         & Breast invasive carcinoma                                        \\
CESC         & 193         & Cervical squamous cell carcinoma and endocervical adenocarcinoma \\
COAD         & 147         & Colon adenocarcinoma                                             \\
ESCA         & 184         & Esophageal carcinoma                                             \\
GBM          & 150         & Glioblastoma multiforme                                          \\
HNSC         & 279         & Head and Neck squamous cell carcinoma                            \\
KICH         & 66          & Kidney Chromophobe                                               \\
KIRC         & 415         & Kidney renal clear cell carcinoma                                \\
KIRP         & 161         & Kidney renal papillary cell carcinoma                            \\
LAML         & 170         & Acute Myeloid Leukemia                                           \\
LGG          & 283         & Brain Lower Grade Glioma                                         \\
LIHC         & 195         & Liver hepatocellular carcinoma                                   \\
LUAD         & 230         & Lung adenocarcinoma                                              \\
LUSC         & 178         & Lung squamous cell carcinoma                                     \\
OV           & 115         & Ovarian serous cystadenocarcinoma                                \\
PAAD         & 150         & Pancreatic adenocarcinoma                                        \\
PCPG         & 179         & Pheochromocytoma and Paraganglioma                               \\
PRAD         & 331         & Prostate adenocarcinoma                                          \\
READ         & 64          & Rectum adenocarcinoma                                            \\
SARC         & 245         & Sarcoma                                                          \\
SKCM         & 342         & Skin Cutaneous Melanoma                                          \\
STAD         & 275         & Stomach adenocarcinoma                                           \\
TGCT         & 149         & Testicular Germ Cell Tumors                                      \\
THCA         & 400         & Thyroid carcinoma                                                \\
THYM         & 119         & Thymoma                                                          \\
UCEC         & 242         & Uterine Corpus Endometrial Carcinoma                             \\
UCS          & 57          & Uterine Carcinosarcoma                                           \\
UVM          & 80          & Uveal Melanoma\\    \hline                                  
\end{tabular}
\label{tab:cancer_types}
\end{table}

\begin{table}[H]
\caption{Gene labels for 50 somatic mutation data in order.}
\centering
\begin{tabular}{cccccc}
\hline
Index & Mutation & Index & Mutation & Index & Mutation \\\hline
1     & TP53     & 18    & FAT4     & 35    & PKHD1L1  \\
2     & TTN      & 19    & HMCN1    & 36    & RYR1     \\
3     & MUC16    & 20    & CSMD1    & 37    & RYR3     \\
4     & PIK3CA   & 21    & MUC5B    & 38    & NEB      \\
5     & CSMD3    & 22    & ZFHX4    & 39    & PCDH15   \\
6     & LRP1B    & 23    & FAT3     & 40    & DST      \\
7     & KRAS     & 24    & SPTA1    & 41    & MLL3     \\
8     & RYR2     & 25    & GPR98    & 42    & MLL2     \\
9     & MUC4     & 26    & PTEN     & 43    & MACF1    \\
10    & FLG      & 27    & FRG1B    & 44    & DNAH9    \\
11    & SYNE1    & 28    & AHNAK2   & 45    & BRAF     \\
12    & USH2A    & 29    & APOB     & 46    & DNAH11   \\
13    & PCLO     & 30    & ARID1A   & 47    & DNAH8    \\
14    & APC      & 31    & LRP2     & 48    & CSMD2    \\
15    & DNAH5    & 32    & XIRP2    & 49    & MUC2     \\
16    & OBSCN    & 33    & Unknown  & 50    & ABCA13   \\
17    & MUC17    & 34    & DMD      &       &         \\\hline
\end{tabular}
\label{tab: mutation}
\end{table}

\subsection{Data processing and distributions}
\label{app:data_process}

The pan-cancer RNASeq data, described in \cite{hoadley2018cell}, were downloaded as the file `EBPlusPlusAdjustPANCAN\_IlluminaHiSeq\_RNASeqV2.geneExp.tsv' from \url{https://gdc.cancer.gov/about-data/publications/PanCan-CellOfOrigin} [accessed June 23, 2021].  This dataset had undergone preprocessing steps described in \cite{hoadley2018cell}, including batch correction using an empirical Bayes approach and upper-quartile normalization.  These data were further log-transformed (via a log$(1+x)$ transformation), and filtered to the 1000 genes with the highest standard deviation after log-transformation.   The log-transformed and filtered data were then gene-centered by subtracting the overall mean (across all cancer types) for each gene.  The distribution of processed expression values for each cancer type are shown in Figure \ref{fig:rna_violin};  the distributions are roughly similar and approximately bell-shaped across the different cancer types.  

The somatic mutation data were binary, prior to the the scaling described in Section \ref{app:scaling}, where `1' implies there is a somatic mutation in the gene for the given sample and `0' implies there is no somatic mutation.  Figure \ref{fig:mut_violin} gives the proportion of samples that have a mutation across the 50 genes considered, for each cancer type.  Some cancer types have several genes that are frequently mutated, while others have a sparser profile with no frequently mutated genes among those considered.      

\begin{figure}[H] 
        \centering
        \includegraphics[width=\textwidth]{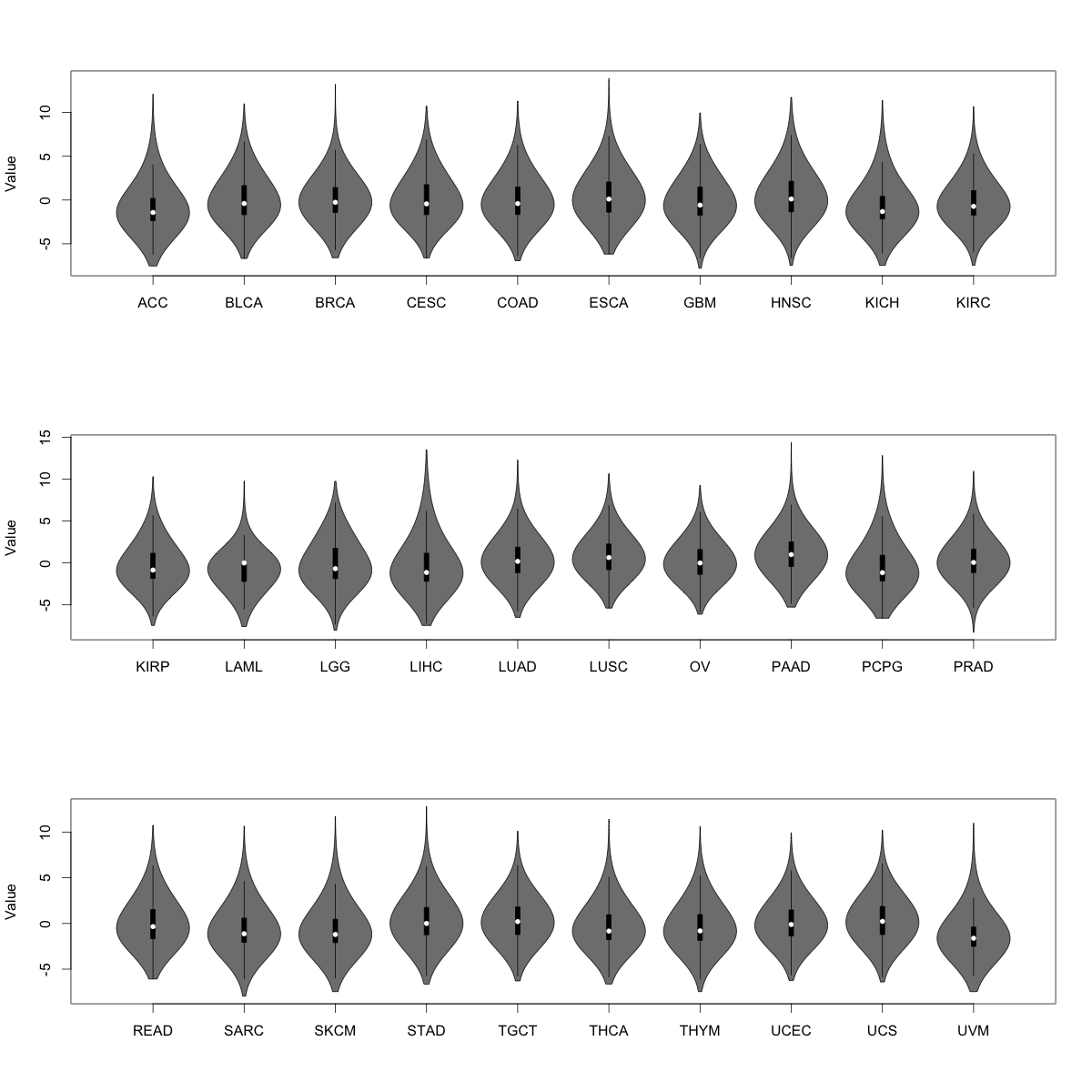}
        \caption{Violin plot of normalized expression values for each cancer type.}\label{fig:rna_violin}
\end{figure}

\begin{figure}[H] 
        \centering
        \includegraphics[width=\textwidth]{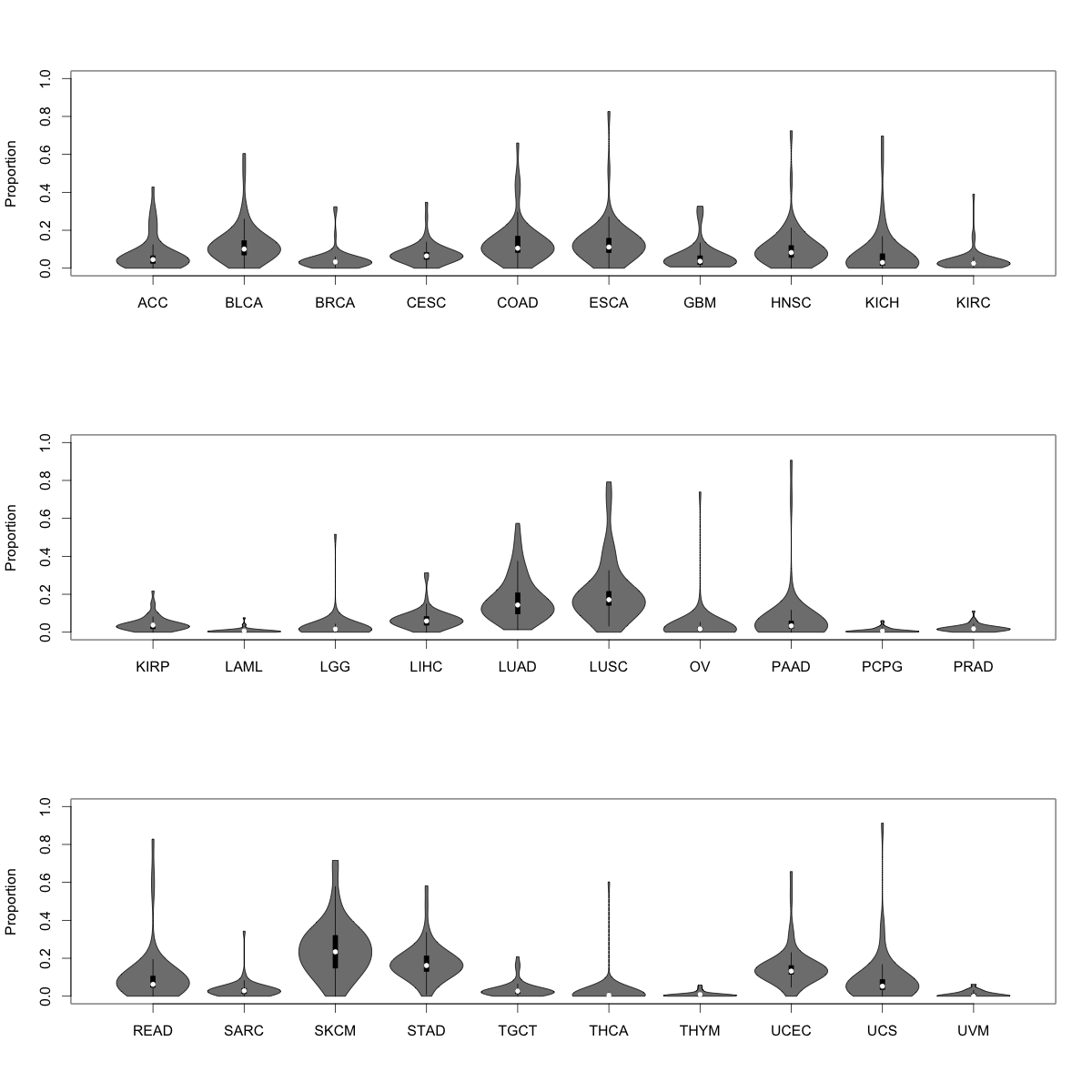}
        \caption{Violin plot of proportion of samples with a somatic mutation across the 50 genes, for each cancer type.} 
        \label{fig:mut_violin}
\end{figure}

\subsection{Selection of model parameters}
\label{app:para}

As mentioned in Section~\ref{sec:decomposition} of the main article, we first apply the optimization with dynamic modules for BIDIFAC+ \citep{Lock2022BADIFACplus}, to uncover 50 low-rank modules in $\bX_{\cdot}$. BIDIFAC+, noted in Section~\ref{sec:obejective}, is the unsupervised version of our proposed model maRRR. The statistical model of BIDIFAC+ is \begin{align*}
    \bX_{\cdot} &= \sum^{L}_{l=1}  \bS_{\cdot}^{(l)} + \bE_{\cdot} 
    \end{align*}
    where $\bS_{\cdot}^{(l)} = [\bS_{1}^{(l)},\bS_{2}^{(l)},...,\bS_{J}^{(l)}]$, $\bE_{\cdot} = [\bE_{1},\bE_{2},...,\bE_{J}]$
and 
\begin{align*}
    \bS_{j}^{(l)} = 
  \begin{dcases} 
  \boldsymbol{0}_{p\times n_j} & \text{if }  \bC_S[j,l] = 0 \\ 
  \bU_S^{(l)}\bV_{Sj}^{(l)T} & \text{if } \bC_S[j,l] = 1.
  \end{dcases}
\end{align*}The loss objective is \begin{align*}
    \min_{\{\bS_{\cdot}^{(l)}\}^L_{l=1}} \{ \frac{1}{2} ||\bX_{\cdot} - \sum^{L}_{l=1}  \bS_{\cdot}^{(l)}||_F^2 +\sum^{L}_{l=1} \lambda_S^{(l)} ||\bS_{\cdot}^{(l)}||_{*}  \}.
\end{align*}


The model aggregates all signals as $\sum^{L}_{l=1} \bS{\cdot}^{(l)}$, yet it doesn't differentiate whether these signals are related to covariates or not. The forward selection process to determine $\bC_S$ of BIDIFAC+ initiates with $\bC_S[:,l] = \bold{0}$ for all $l = 1,...,L$, and progressively includes cohorts $j$ ($\bC_S[j,l] = 1$) to minimize the objective function; see Section 6.3 in \citep{Lock2022BADIFACplus} for complete details. This update occurs iteratively through gradient descent, similar to our Algorithm 2 in Section \ref{sec:opt} but with dynamic module memberships.  This iterative process continues until convergence of loss, at which point the current $\bC_S$ is considered the final set of modules. After thorough analysis, we opt for $L = 50$ since including more modules does not significantly enhance our ability to explain variance. A few modules, selected at various $L$ values, contribute substantially to the final model's variance. Consequently, we select the corresponding module indicator matrix $\bC_S$ for $L = 50$, thus setting $\bC_Y = \bC_S$ to effectively partition the variance linked to covariate effects.
A comprehensive visualization of final estimated module information is presented in the subsequent heatmap in Figure \ref{fig:heatmap_CS}.

\begin{figure}[H] 
        \centering
        \includegraphics[height=10.5cm,width=16cm]{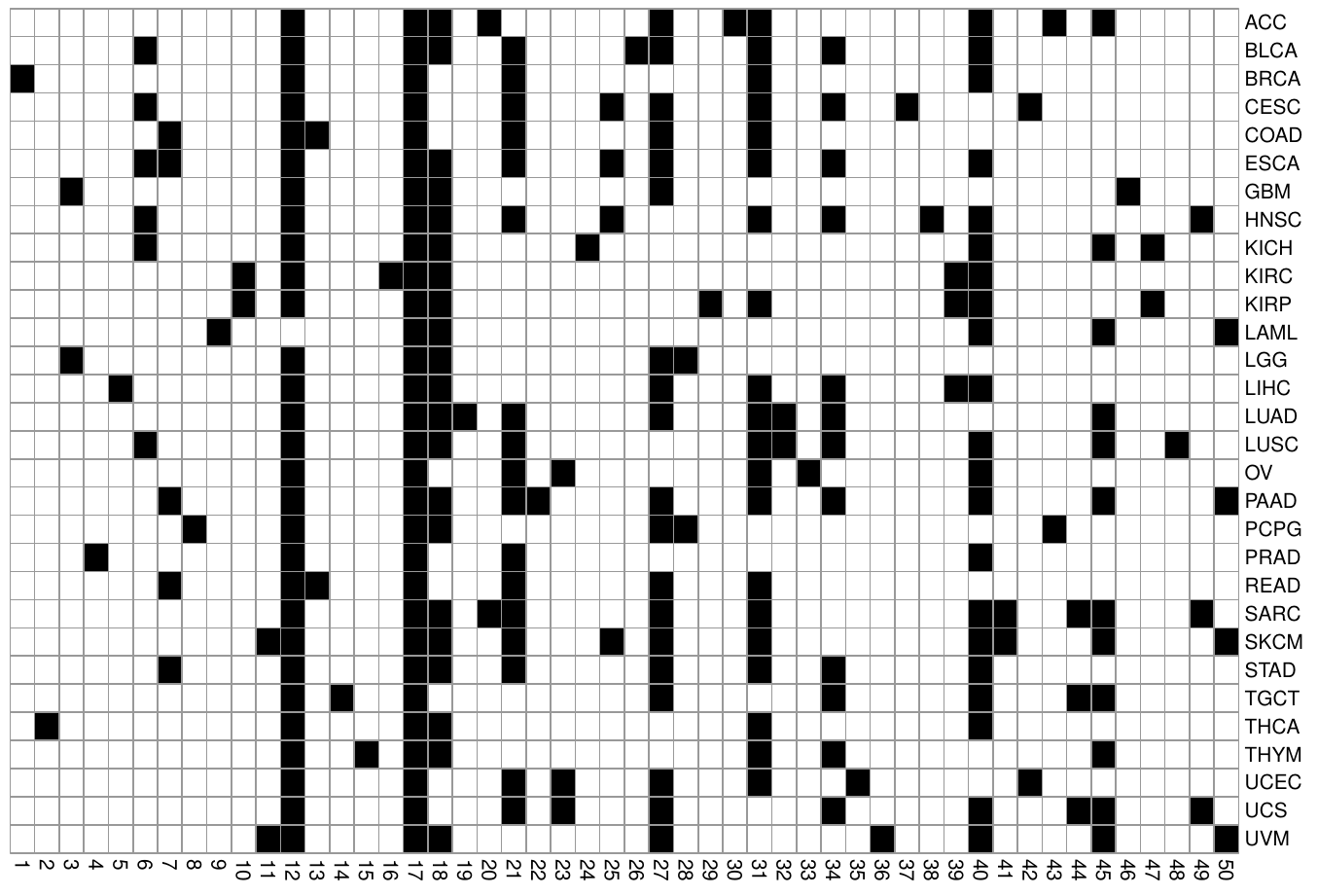}
        \caption{Heatmap for 50 modules detected by BIDIFAC+ model (the chosen $\bC_S$) in the TCGA real data application. The row names represent the 30 cohorts while the column names represent the 50 modules. Black grids mean existence of cohorts in the current module. } \label{fig:heatmap_CS}
\end{figure}

       


As outlined in Appendix~\ref{app:scaling}, we apply scaling to the standardized $\bX_{\cdot}$ using the median absolute deviation estimator (\citep{gavish2017optimal}) to ensure that the residual variance is approximately 1, i.e., $Var(\bE_{\cdot}) = 1$. Drawing from random matrix theorems with a variance of 1 in Section~\ref{sec:theory}, we assign $\lambda_B^{(i)} = \sqrt{1000} + \sqrt{50}, i = 1,...,50$ since all $\bB_i$ share the same matrix size. Additionally, we set $\lambda_S^{(1)} = \sqrt{1000} + \sqrt{976}$, $\lambda_S^{(2)} = \sqrt{1000} + \sqrt{400}$, ..., and $\lambda_S^{(50)} = \sqrt{1000} + \sqrt{742}$, with the second term representing the square root of the number of samples in the respective module.

In Algorithm 1, we establish a general upper bound for the estimated rank of each $\bB_i$ and $\bS_\cdot^{(i)}$ at 20, denoted as $r_{B,upper} = 20$ and $r_{S,upper} = 20$ respectively. Through experimentation, we determined that 20 is the minimum value for Algorithm 1 to match the performance of Algorithm 2. This rank upper bound of 20 is deemed reasonable for low-rank approximations. While the true ranks of most estimates hover around 10, opting for smaller maximum rank values dismisses significant signals. Conversely, larger values don't provide substantial additional information for improving predictions, but rather increase computational complexity. 

\subsection{Additional pan-cancer analysis}
\label{app:pan_cancer}

With the identification of 50 modules and a maximum rank set at 20, maRRR attains an impressive Relative Squared Error (RSE) of 0.184, signifying the substantial capture of variation. Following Section~\ref{sec:decomposition}, the subsequent scatterplot Figure~\ref{fig:scatter} depicts the extent to which mutations account for variance within each module. While the impact of mutations is not overwhelmingly dominant, it is nevertheless significant, aligning with our initial expectations. This outcome is consistent with the extensive genetic information present in gene expressions that is unrelated to mutations. When considered alongside the heatmap of $\bC_S$ (Figure~\ref{fig:heatmap_CS} in Appendix~\ref{app:para}), it becomes evident that modules such as numbers 2, 6, 7, 13, 22, 23, 41, and 50 exhibit substantial influence from mutations. Remarkably, nearly all of the 30 cancer types make varying appearances within those modules, with the exceptions of ACC, KIRC, and PRAD.

\begin{figure}[H] 
        \centering
        \includegraphics[height=10.5cm,width=16cm]{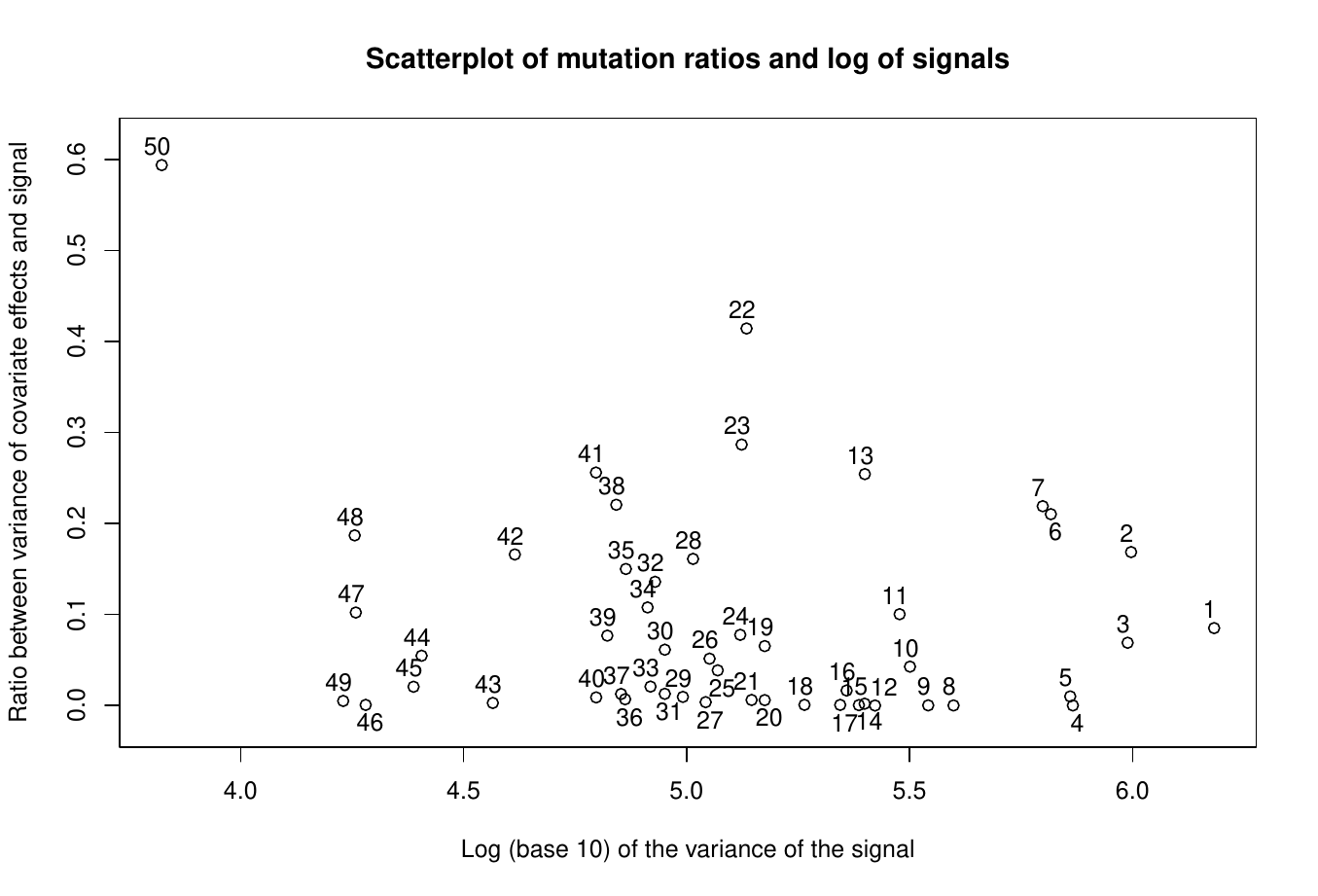}
        \caption{Scatterplot of The x-axis is the log (base 10) of the variance of the signal; the y-axis is the ratio between variance of covariate effects $\bB_i\bY_{\cdot}^{(i)}$ and signal. The number near each dot is the module identification as in $\bC_S$. }
       \label{fig:scatter}
\end{figure}

Besides the individual and partially shared shared structures, we are able to detect global shared effects as well. Module 12 has little covariate effects, which implies that all 50 candidate mutations do not have significant global effects on 29 cancer types. However, based on Figure~\ref{fig:s12}, we are able to observe several clusters that share across all the cancer types (horizontally). 

\begin{figure}[H]
        \centering
        \includegraphics[height=10cm,width=16cm]{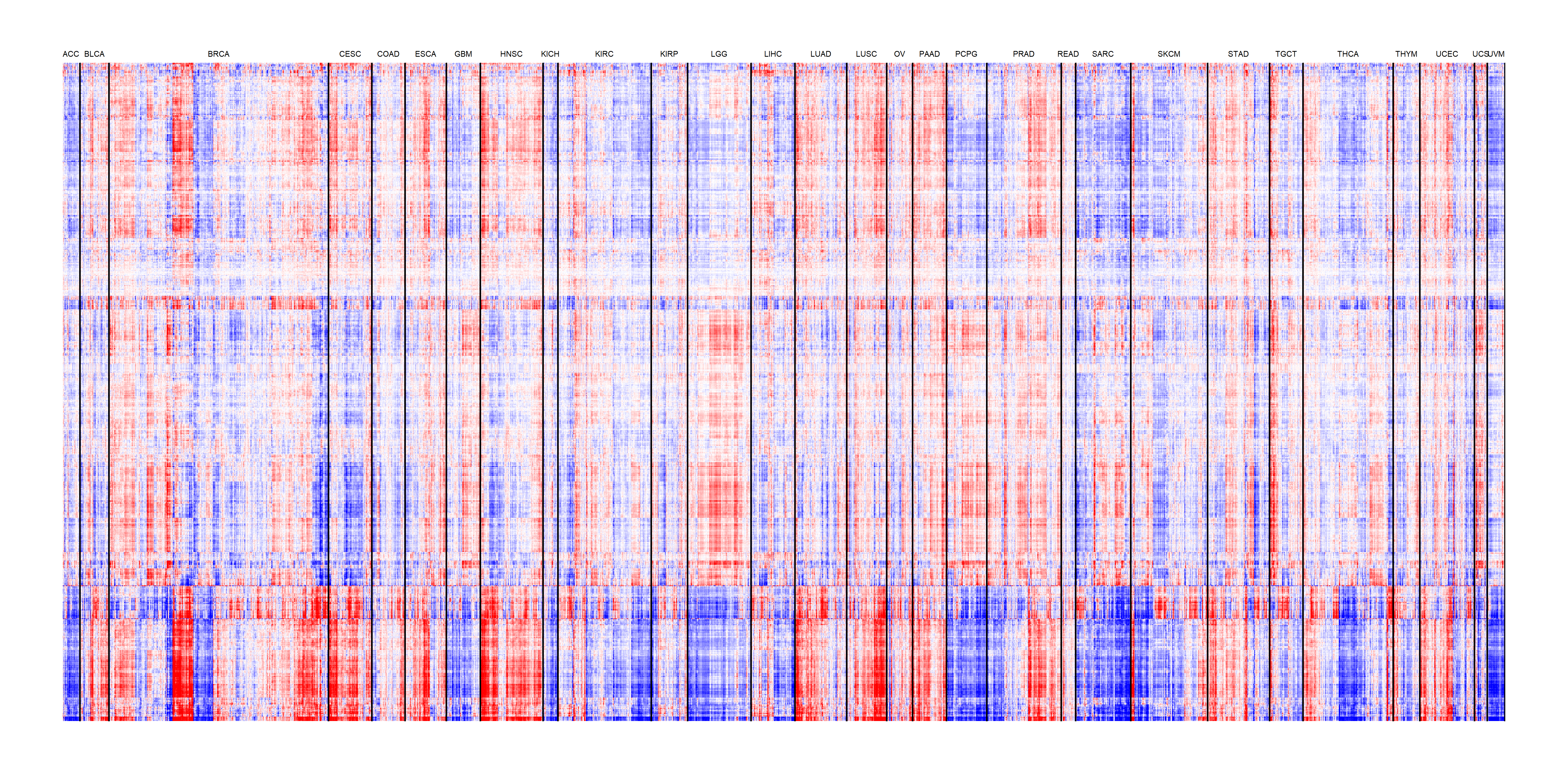}
        \caption{Heatmap for Module 12 auxiliary structure. Columns represent individual samples, while rows represent distinct gene expressions. This column-row representation is consistent across all other heatmaps in the \href{https://www.dropbox.com/s/891gvukhymr4a4b/30grps_50mods_est_0509.csv?dl=0}{online application results spreadsheet}. Extreme values outside of 3 standard deviations are set to be threshold values. The graphics is based on the relative scale. Red colors represent high gene expressions and blue colors represents low gene expressions. }
     \label{fig:s12}   
\end{figure}
\end{appendices}

\bibliographystyle{biom}

\label{lastpage}
\end{document}